\theoremstyle{plain}
\newtheorem{theorem}{Theorem}
\newtheorem{lemma}{Lemma}
\theoremstyle{definition}
\newtheorem{example}{Example}
\newtheorem{condition}{Condition}
\newtheorem*{remark}{Remark}
\newcommand{\1}[1]{\overset1{#1}}
\newcommand{\bn}{\mathbf{n}}
\def\RR{\mathbf{R}}
\def\lra{\longrightarrow}
\def\a{\alpha}
\def\be{\beta}
\def\ga{\gamma}
\def\la{\lambda}
\def\e{\varepsilon}
\def\ph{\varphi}
\def\Ga{\Gamma}
\def\La{\Lambda}
\DeclareMathOperator\im{Im}
\DeclareMathOperator\supp{supp}
\DeclareMathOperator\const{const}
\DeclareMathOperator\tr{tr}
\DeclareMathOperator{\ind}{ind}
\def\cH{{\mathcal{H}}}
\def\cI{{\mathcal{I}}}
\def\cJ{{\mathcal{J}}}
\def\cP{{\mathcal{P}}}
\def\wt{\widetilde}
\def\wh{\widehat}
\def\pa{\partial}
\newcommand\pd[2]{\frac{\pa #1}{\pa #2}}
\newcommand\od[2]{\frac{d #1}{d #2}}
\let\rom\textup
\newcommand{\BL}{\biggl}
\newcommand{\BR}{\biggr}
\newcommand{\Bl}{\Bigl}
\newcommand{\Br}{\Bigr}
\newcommand{\bl}{\bigl}
\newcommand{\br}{\bigr}
\newcommand{\abs}[1]{\lvert#1\rvert}
\begin{document}

\author{S.~Yu.~Dobrokhotov}
\address{A.~Ishlinsky Institute for Problems in Mechanics,
Moscow; Moscow Institute of Physics and Technology, Dolgoprudny,
Moscow District}
\email{dobr@ipmnet.ru}
\thanks{Supported by RFBR (grant~11-01-00973-a) and by the
Archimedes Center for Modeling, Analysis \& Computation (ACMAC),
Crete, Greece (grant FP7-REGPDT-2009-1). S.~Yu.~D. is grateful to
the staff of the Institute for Molecules and Materials of the
Radboud University Nijmegen, The Netherlands. S.~Yu.~D. and
V.~E.~N. are grateful to the staff of ACMAC and the Department of
Applied Mathematics, University of Crete for support and kind
hospitality.}

\author{G.~Makrakis}
\address{Department of Applied
Mathematics, University of Crete; Institute of Applied \&
Computational Mathematics, Foundation for Research and
Technology-Hellas, Heraklion, Crete, Greece}
\email{makrakg@iacm.forth.gr}

\author{V.~E.~Nazaikinskii}
\address{A.~Ishlinsky Institute for Problems in Mechanics,
Moscow; Moscow Institute of Physics and Technology, Dolgoprudny,
Moscow District}
\email{nazay@ipmnet.ru}

\author{T.~Ya.~Tudorovskii}
\address{Radboud
University Nijmegen, Institute for Molecules and Materials, NL-6525
AJ Nijmegen, The Netherlands}
\email{t.tudorovskiy@science.ru.nl}

\title[New formulas for Maslov's canonical operator]%
{New formulas for Maslov's canonical operator\\
in a neighborhood of focal points and caustics \\
in 2D semiclassical asymptotics}

\maketitle

\section{INTRODUCTION}

Maslov's canonical operator \cite{Mas1} (see also
\cite{MaFe1,MSS1,Mas6,MaNa2,BelDob92}) is used when constructing
short-wave (high-frequency, or rapidly oscillating) asymptotic
solutions for a broad class of differential equations with real
characteristics. The asymptotics given by the canonical operator
are a far-reaching generalization of ray expansions in problems of
optics, electrodynamics, etc.\ and of WKB asymptotics for equations
of quantum mechanics. These asymptotics are based on some solutions
of the equations of classical (Hamiltonian) mechanics and in a
sense permit automatically and globally writing out solutions of
equations of quantum and wave mechanics taking into account the
focal points and caustics occurring in the problem. The
oscillations are usually characterized by a large positive
parameter~$k$ in problems of optics and by a small positive
parameter~$h$ in problems of quantum mechanics, and the asymptotics
should be constructed as $k\to +\infty$ and $h\to +0$,
respectively; in the present paper, we use the parameter~$h$. The
construction of Maslov's canonical operator is based on a
fundamental geometric object known as a \textit{Lagrangian
manifold}.  While the original differential equation lives in an
$n$-dimensional configuration space~$\mathbb{R}^n_x$ with
coordinates $x=(x_1,\ldots,x_n)$, the Lagrangian manifold, which we
denote by~$\Lambda^n$ is a smooth $n$-dimensional manifold in the
phase space $\mathbb{R}^{2n}_{px}$ with coordinates $(p,x)$,
$p=(p_1,\ldots,p_n)$, $x=(x_1,\ldots,x_n)$. In the one-dimensional
case, where Lagrangian manifolds~$\Lambda^1$ are curves on the
phase plane $(p,x)$, one can manage without using these manifolds
when constructing asymptotic solutions by the WKB method, but even
in this case they prove to be rather useful. Let
$\alpha=(\alpha_1,\ldots,\alpha_n)$ be coordinates on~$\Lambda^n$;
then one can specify~$\Lambda^n$ by the formulas
$\Lambda=\{p=P(\alpha),x=X(\alpha)\}$.  In physical problems, as a
rule, the coordinates~$\alpha$ vary on the product of the
$k$-dimensional Euclidean space~$\mathbb{R}^k$ and the
$n-k$-dimensional torus~$\mathbb{T}^{n-k}$ or the $n-k$-dimensional
sphere~$\mathbb{S}^{n-k}$. When constructing a solution
on~$\Lambda^n$, one should specify an amplitude $A=A(\a)$, a
measure (volume element)~$d \mu$ and a distinguished
point~$\alpha_0$ on~$\Lambda^n$ (the so-called \textit{central
point}\footnote{A different choice of the central point is
equivalent to multiplying the canonical operator by a constant
phase factor $e^{i\theta}$.}).

Maslov's canonical operator~$K_{\La^n}^h$ takes a function~$A$
on~$\Lambda^n$ to a function~$u(x,h)$of $x\in \mathbb{R}^n_{x}$; we
denote this correspondence by
\begin{equation}\label{KO}
    u(x,h)=[K^h_{\Lambda^n}A](x).
\end{equation}
An important role in the definition and properties of the canonical
operator is played by the \textit{Lagrangian
singularities}~$\Sigma$, which are defined as the set of zeros of
the Jacobian $\mathcal{J}=\det \frac{\pa X}{\pa \alpha}$
corresponding to the projection of~$\Lambda^n$ onto the
configuration space~$\mathbb{R}^n$. The points of~$\Sigma$ are said
to be \textit{focal},and the projection of~$\Sigma$
onto~$\mathbb{R}^n$ specifies the \textit{caustics} of the wave
field~$u(x,h)$. The definition of~$u(x,h)$ involves some additional
objects and is nonunique, but this nonuniqueness gives only small
changes in~$u(x,h)$ as~$h\to 0$ unimportant from the viewpoint of
physical applications. Finally, note that Maslov's canonical
operator is an object of function theory on its own, even though
its main applications are related to partial differential
equations.

One main idea underlying the canonical operator is to pass from the
original differential equation in the space~$\mathbb{R}^n_x$ to a
simpler induced equation on~$\Lambda^n$. The manifold~$\Lambda^n$
is not universal even for a fixed differential equation; it depends
on the problem considered for that equation. The solution~$A$ of
the reduced equation on~$\Lambda^n$, known as the
\textit{amplitude}, depends on the problem as well. It is very
important that $A$~is a smooth function on~$\Lambda^n$ even in the
vicinity if the Lagrangian singularities, in contracts to the
amplitudes in the traditional ray (WKB) expansions. For many types
of problems (and for various original differential equations),
there exist recipes or algorithms for constructing the
corresponding manifolds and amplitudes. Once $\Lambda^n$~and~$A$
have been obtained, the solution~$u(x)$ of the original problem for
the corresponding differential equation can be reconstructed by
formula~\eqref{KO}. In other words, given~$\Lambda^n$ and~$A$,
$[K^k_{\Lambda^n}A](x)$ is the \textit{answer to the problem}, and
this answer automatically includes objects and operations of ray
expansions such that the behavior in caustic domains, passage
across the caustics, matching of various asymptotic
representations, etc. Thus, the problem is reduced to the
construction of~$\Lambda^n$ and~$A$ and to the simplification of
the expression $[K^k_{\Lambda^n}A](x)$ in specific cases.

The right-hand side of~\eqref{KO} can only very vaguely be called a
formula; it is rather an algorithm or a set of rules that permit
one to implement~\eqref{KO} in the form of more or less closed-form
expressions containing rapidly oscillating exponentials or
integrals of such exponentials. We point out that, first, that
these formulas are not as a rule the same (universal) for all
values of the variables~$x$; they have different asymptotic
representations in different domains (depending on the problem).
Second, even in one and the same domain these representations can
be defined nonuniquely, and a lucky choice of a representation may
substantially simplify the (local) form of the solution and permit
one to represent it, for example, via well-known special or even
elementary functions. Maslov suggested a universal recipe for
representing the function\footnote{In contrast to the global
function $[K^h_{\Lambda^n}A](x)$, this recipe depends on the
coordinate system chosen in $\mathbb{R}^n_x$.}
$[K^h_{\Lambda^n}A](x)$ in a neighborhood of the caustics on the
basis of the partial Fourier transform (i.e., the Fourier transform
with respect to part of the variables). This recipe applies in the
most general situation, but, for a rather broad class of
interesting problems, one can (more conveniently) use different
representations that are not related to the choice of a partial
Fourier transform. The present paper, which deals with the
two-dimensional case ($n=2$), presents a new integral
representation (Eq.~\eqref{eq:02-10}) of oscillatory solutions for
the case in which the fundamental $1$-form $p\,dx$ nowhere vanishes
on the corresponding Lagrangian manifolds~$\Lambda^n$. We also give
some applications. We point out that our considerations do not
affect the general concept of the construction of Maslov's
canonical operator and the fundamental underlying objects; we only
suggest a more convenient implementation useful in specific
physical problems, for example, those related to the asymptotics of
solutions of the scattering problem, asymptotics of the Green
function, linear hyperbolic systems with variable coefficients
(e.g., the wave equation) with localized initial data (e.g., see
\cite{DShTMZ,DShT}), etc. Note also that our formulas are in a
sense a special case of the general formulas of the theory of
Fourier integral operators~\cite{Hor6}, and our main result is a
specific (constructive) form and an algorithm for the construction
of these formulas, which can in particular be used in combination
with software like \textsl{Mathematica} or \textsl{MatLab}.

The paper is organized as follows. In Sec.~2, we consider an
important example illustrating the idea of a new formula for
Maslov's canonical operator and explaining why it is tempting to
write it out. This new formula~\eqref{eq:02-10} and associated
objects are presented in Sec.~3. Section~4 contains some examples.
The proof of the main theorem and the formulas expressing Maslov's
canonical operator in a neighborhood of the caustics via the Airy
and Pearcy functions are given in Appendices~1 and~2, respectively.

\subsubsection*{Some notation}

All vectors are understood as column vectors. If $\xi$ and $\eta$
are $n$-vectors, then we write $\langle \xi,\eta\rangle$ for the
bilinear form $\langle \xi,\eta\rangle =\sum_{j=1}^n
\xi_j\eta_j=\xi^T\eta$, where the symbol $T$ indicates the
transpose of a matrix. Partial derivatives are denoted by
subscripts; for example, $\Phi_x=\pa\Phi/\pa x$.

\section{LAGRANGIAN MANIFOLD FOR THE BESSEL FUNCTION}
\label{sec:21}

There are only a few types of Lagrangian manifolds arising in
specific physical applications. The simplest examples are
Lagrangian surfaces. Let $S(x)$, $x\in \mathbb{R}^n$, be a smooth
function; then the equation $p=\frac{\pa S}{\pa x}$ specifies a
surface in the phase space $\mathbb{R}^{2n}_{px}$. This surface is
a Lagrangian manifold, and associated with this manifold are
functions $u(x,h)$ of the form $A(x)e^{\frac{iS(x)}{h}}$, known as
\textit{WKB solutions}. For such a function to be an asymptotic
solution of the Helmholtz equation $h^2\triangle u+n^2(x)u=0$, it
is in particular necessary that the function~$S$ satisfy the
Hamilton--Jacobi equation ${\nabla S}^2=n^2(x)$.

Let us present an example of a 2D Lagrangian manifold, which is the
main example for this paper. This manifold corresponds to the
Helmholtz equation with $n^2(x)=1$.  We start from the following
simple problem: construct rapidly oscillating functions $u(x,h)$,
$x=(x_1,x_2)\in \mathbb{R}_x^2$, associated with the
two-dimensional Lagrangian cylinder
\begin{equation}\label{eq:01-1}
 \begin{gathered}
  \Lambda^2=\{(x,p)\colon x=X(\tau,\psi),\;
p=P(\tau,\psi),\;\; \tau\in\mathbb{R},\;
    \psi\in \mathbb{S}^1=\mathbb{R}\!\!\!\!\pmod{2\pi}\},
    \\\text{where}\quad
   X(\tau,\psi)=\tau\bn(\psi),\quad
   P(\tau,\psi)=\bn(\psi),\quad
     \bn(\psi)=(\cos\psi,\sin\psi)^T,
\end{gathered}
\end{equation}
in the four-dimensional phase space $\mathbb{R}_{(x,p)}^4$ with
coordinates $(x,p)=(x_1,x_2,p_1,p_2)$. The functions $(\tau,\ph)$
form a coordinate system on~$\Lambda^2$. This manifold was used
in~\cite{DShTMZ,DShT} in the representation of rapidly decaying
function of the form $f\bl(\frac{x}{h}\br)$, $h\ll1$. The
projection of the manifold $\Lambda^2$ onto the configuration space
(plane) $\mathbb{R}^2_x$ is a two-sheeted covering with a
singularity at $x=0$. One can readily compute the Jacobian
$\mathcal{J}=\det(X_\tau,X_\psi)=\tau$. Its zeros
(\textit{Lagrangian singularities}) $\Sigma$ are determined by the
equation $\tau=0$, which specifies a circle on~$\Lambda^2$. The
projection of this circle into $\mathbb{R}^2_x$ is the
\textit{degenerate caustic} consisting of the single point $x=0$,
and hence the manifold $\Lambda^2$ is not in \textit{general
position} in the sense of catastrophe theory. Nevertheless,
manifolds of this type and their generalizations play an important
role in physical applications.

Recall that the points of~$\Lambda^2$ where $\mathcal{J}\neq0$ are
said to be \textit{regular}, in contrast to the \textit{singular}
(\textit{focal}) points, where  $\mathcal{J}=0$. The topological
characteristic known as the \textit{Maslov index} plays an
important role in asymptotic formulas. The Maslov index is defined
in our example is as follows.  The circle $\tau=0$ divides
$\Lambda^2$ into two parts  $\Omega_\pm$ consisting of regular
points with $\tau>0$ and $\tau<0$, respectively, with the same
Maslov index $m_+$ for all points in $\Omega_+$ and the same Maslov
index $m_-$ for all points in $\Omega_-$. Fix a point on
$\Lambda^2$ with coordinates $\psi_0,\tau_0=+0$ and define the
Maslov index $m(\psi_0,\tau_0)=0$; then $m_+=0$, and one can
readily prove that $m_-=1$ (see Example~\ref{3.1} below and, e.g.,
\cite{DShTMZ,DShT}).

Let us write out the expressions provided for the rapidly
oscillating functions by the standard construction of Maslov's
canonical operator $[K_{\La^2}^h a](x)$, acting on a smooth
function $a(\tau,\psi)$ on $\La^2$. Outside the caustic $x=0$, we
have the WKB function
\begin{equation}\label{WKB1}
\begin{split}
 u(x,h)=[K_{\La^2}^ha](x)
       &\equiv \sum_{\pm}
        \frac{1}{\sqrt{|J(\tau_\pm(x))|}}
        \bl(e^{\frac{i}{h}\tau_\pm(x)}e^{-\frac{i\pi m_\pm}{2}}a(\tau_\pm(x),\psi_\pm(x)\br)
 \\
       &=\frac{e^{-i\frac{\pi}{4}}}{\sqrt{|x|}}
       \sum_{\pm}(e^{\pm \frac{i}{h} (|x|+\pi/4)}a(\pm|x|,\psi_\pm(x)).
\end{split}
\end{equation}
Here $\tau_\pm(x)=\pm|x|$, the functions $\psi_+(x)=\varphi(x)$ and
$\psi_-(x)=\varphi(x)+\pi$ are the solutions of the equations
\begin{equation}\label{WKB2}
    \tau \cos\psi=x_1,\qquad \tau\sin\psi=x_2
\end{equation}
for positive (the $+$ sign) and negative (the $-$ sign) $\tau$,
respectively, and $\varphi(x)$ is the polar angle of the vector
$x$. To construct the function $u(x,k)=[K_{\La^2}^h a](x)$
globally, including a neighborhood of the caustic $x=0$, one covers
a neighborhood of the preimage of $x=0$ on $\Lambda^2$ by the
\textit{canonical charts}
\begin{align*}
 \Omega_{1}&=\{\psi\in
(-3\pi/8, 3\pi/8)\},&\Omega_{3}&=\{\psi\in
 (5\pi/8, 11\pi/8)\}\quad \text{with coordinates $(x_1,p_2)$} ,\\
 \Omega_{2}&=\{\psi\in
 (\pi/8, 7\pi/8)\},&
 \Omega_{4}&=\{\psi\in
(9\pi/8, 15\pi/8)\}\quad \text{with coordinates $(p_1,x_2)$}.
\end{align*}
Let $1=\sum_{j=1}^4e_j(\psi)$ be a partition of unity on $\La^2$
subordinate to the cover of $\La^2$ by these neighborhoods. Then,
up to terms of lower order as $h\to +0$, Maslov's canonical
operator $K_{\La^2}^h$ applied to a function $a$ on $\La^2$ is
given by the formula
\begin{multline}\label{eq:01-4}
  u(x,h)=[K_{\La^2}^h a](x)\equiv
\sum_{j=1,3}\BL(\frac{i}{2\pi h}\BR)^{1/2}
\int_{-\infty}^{+\infty}\frac{e^{\frac{i}{h}(\tau \cos^2
\psi+x_2p_2)}a(\tau,\psi)e_{j}(\psi)} {\abs{\cos\psi}}
\bigg\vert{}_{\substack{\psi=\psi_{j}(x_1,p_2)\\
\tau=\tau_{j}(x_1,p_2)}} d p_2
 \\
 + \sum_{j=2,4}\BL(\frac{i}{2\pi h}\BR)^{1/2}
\int_{-\infty}^{+\infty} \frac{e^{\frac{i}{h}(\tau \sin^2 \psi+
x_1p_1)}a(\tau,\psi)e_{j}(\psi)} {|\sin\psi|}
\bigg\vert{}_{\substack{\psi=\psi_{j}(p_1,x_2)\\
\tau=\tau_{j}(p_1,x_2)}}
 d p_1,
\end{multline}
where $i^{1/2}=e^{i\pi/4}$ and the functions $\tau_{j}$ and
$\psi_{j}$ express the global coordinates $(\tau,\psi)$ on $\La^2$
via the coordinates in $\Omega_{j}$ (i.e., via $(x_1,p_2)$ for
$j=1,3$ and via $(p_1,x_2)$ for $j=2,4$). \textit{We point out
that, modulo small correction, \eqref{eq:01-4} is independent of
the choice of the charts $\Omega_{j}$ and the partition of unity
$\{e_j\}$.}

\medskip

Note that formula \eqref{eq:01-4} can be significantly simplified,
especially from the view point of specific applications. Namely,
one can replace the integration over the momenta $p_1$ and $p_2$ by
integration over the angle $\psi$ in each chart $\Omega_{j}$ by
setting $p_1=\cos \psi$ or $p_2=\sin \psi$, depending on whether
$j$ is even or odd. This gives
\begin{align}\label{eq:01-4a}
  u(x,h)&= \BL(\frac{i}{2\pi h}\BR)^{1/2}
        \int_{0}^{2\pi}e^{\frac{i}{h}(x_1 \cos
        \psi+x_2\sin \psi)}A(x,\psi)d \psi,
\qquad\text{where}\\ \label{eq:01-4b}
 A(x,\psi)&=
\sum_{j=1,3}a\BL(\frac{x_1}{\cos\psi},\psi\BR)e_{j}(\psi) +
\sum_{j=2,4}a\BL(\frac{x_2}{\sin\psi},\psi\BR)e_{j}(\psi).
\end{align}
If the function  $a(\tau,\psi)$ is independent of $\tau$, then
$A=a(\psi)$. Moreover, if $a=1$, then the function \eqref{eq:01-4a}
is, up to a multiplicative constant, just the \textit{zero-order
Bessel function}
$u(x,h)=\mathbf{J}_0\Bl(\frac{\sqrt{\smash[b]{x_1^2+x_2^2}}}h\Br)$,
and \eqref{WKB1} is none other than the leading term of its
asymptotics for large values of the argument.

This example shows that definition \eqref{eq:01-4a},
\eqref{eq:01-4b}, based on integration over the angle, is more
constructive and pragmatic than the standard definition of Maslov's
canonical operator based on integration over momenta; in particular
it does not require splitting into four charts in the corresponding
formulas; however, the ``practical'' drawback in
definition~\eqref{eq:01-4b} of the function $A$ is its
``noninvariant'' form with respect to the choice of the charts
$\Omega_{j}$ and the partition of unity $e_j$, although we again
point out that the final result is invariant modulo a small
correction. \textit{The main goal of this paper and the formulas
constructed below is to provide a representation of Maslov's
canonical operator in a neighborhood of Lagrangian singularities
based on the integration over ``angle variables'' similar to $\psi$
and directly involving the function $a$ on the corresponding
Lagrangian manifold without a partition of unity etc.}

\medskip

Let us show how one can naturally construct a function of the form
\eqref{eq:01-4a} in our example without using the standard
representation \eqref{eq:01-4}.

\begin{small}
Just as in the general case in Sec.~\ref{sec:22} below, we use a
specific form of the universal construction of the theory of
Fourier integral operators~\cite{Hor6}, which (being restated for
the case of asymptotics with respect to the small parameter~$h$)
say that, to construct the rapidly oscillating functions
corresponding to a given Lagrangian manifold~$\La$, one should find
a real-valued \textit{nondegenerate phase
function}~$\Phi(x,\theta)$ depending on parameters $\theta\in\RR^m$
($m\ge0$) and \textit{determining}~$\La$ in the sense that the
differentials $d\Phi_{\theta_j}(x,\theta)$, $j=1,\dotsc,m$, are
linearly independent at the points where $\Phi_\theta(x,\theta)=0$
and one has the representation $\La^2=\{(x,p)\colon \exists
\theta\,\Phi_\theta(x,\theta)=0,\; p=\Phi_x(x,\theta)\}$. Then the
desired rapidly oscillating functions have the form
\begin{equation}\label{RO}
  u(x,h)=\BL(\frac i{2\pi h}\BR)^{m/2}
  \idotsint
  e^{\frac{i}{h}\Phi(x,\theta)}A(x,\theta)\,d\theta_1\dotsm
  d\theta_m
\end{equation}
with some \textit{amplitude}~$A(x,\theta)$, which is a smooth
function compactly supported in~$\theta$. The universality of this
construction is in particular shown by the following theorem.
\begin{theorem}\label{th0}
Let $\Phi_1(x,\theta)$, $\theta\in\RR^{m_1}$, and
$\Phi_2(x,\theta)$, $\theta\in\RR^{m_2}$, be two nondegenerate
phase functions determining the same Lagrangian manifold~$\La$.
Then the corresponding oscillatory integrals of the form~\eqref{RO}
specify one and the same class of rapidly oscillating functions
\rom(and one can explicitly write out the transformation of
amplitudes in the passage from one representation to the
other\rom).
\end{theorem}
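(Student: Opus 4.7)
The plan is to prove Theorem \ref{th0} along the lines of the classical equivalence-of-phase-functions argument in H\"ormander's theory of Fourier integral operators, organized in two reduction steps. After localizing on $\La$ (so that I may assume both $\Phi_j$ are defined near a single point $(x_0,\theta_0^{(j)})$ of the critical set $C_j=\{\Phi_{j,\theta}=0\}$), I would first equalize $m_1$ and $m_2$ by adjoining nondegenerate quadratic forms in auxiliary variables, and then, with matching numbers of parameters, identify the two phase functions by a fibre-preserving change of variables $\theta_2=T(x,\theta_1)$ and read off the amplitude transformation. A partition of unity on $\La$ then globalizes the construction.

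\textbf{Equalizing $m$.} For any symmetric nondegenerate $k\times k$ matrix $Q$, set $\wt\Phi(x,\theta,\eta)=\Phi(x,\theta)+\tfrac12\langle Q\eta,\eta\rangle$, $\eta\in\RR^k$. Then $\wt\Phi$ is nondegenerate, its critical set is $C\times\{0\}$, and $\wt\Phi_x=\Phi_x$ there, so $\wt\Phi$ determines the same Lagrangian as $\Phi$. Exact stationary phase in $\eta$ converts an oscillatory integral \eqref{RO} with phase $\wt\Phi$ and $m+k$ parameters into one with phase $\Phi$ and $m$ parameters, producing the factor $\abs{\det Q}^{-1/2}e^{i\pi\sign Q/4}$ in the amplitude and exactly matching the change in the prefactor $(i/2\pi h)^{m/2}$, with remainder of order $h$. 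Applying this to $\Phi_1$ and $\Phi_2$ with suitable $Q_1,Q_2$ reduces the theorem to the case $m_1=m_2=m$.

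\textbf{Identifying the phases.} Nondegeneracy makes each $C_j$ a smooth $n$-dimensional submanifold of $\RR^n\times\RR^m$, and the map $\pi_j\colon C_j\to\La$, $(x,\theta)\mapsto(x,\Phi_{j,x}(x,\theta))$, a local diffeomorphism. The composition $\pi_2^{-1}\circ\pi_1$ identifies $C_1$ with $C_2$ over $\La$. Since $\pi_j^*(p\,dx)=d(\Phi_j|_{C_j})$, the two functions $\Phi_j|_{C_j}$ represent the same primitive of the fundamental $1$-form $p\,dx$ on $\La$ and thus agree under the identification up to an additive constant, which I absorb into the amplitude. I then extend this identification to a diffeomorphism $T$ of neighbourhoods of $(x_0,\theta_0^{(1)})$ and $(x_0,\theta_0^{(2)})$ by a Morse splitting lemma with parameters: using the implicit function theorem to straighten $C_j$, both $\Phi_j$ acquire the form ``function of $x$ alone plus a smoothly varying nondegenerate quadratic form in the transverse $\theta$-variables'', and matching Hessian signatures of these transverse parts is what permits a smooth linear change of variables bringing them to a common normal form. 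Substituting $\theta_2=T(x,\theta_1)$ into \eqref{RO} then yields
$A_2(x,T(x,\theta_1))=\abs{\det(\pa T/\pa\theta_1)}^{-1}A_1(x,\theta_1)+O(h)$,
giving the explicit amplitude transformation; consistency of the local patches under a partition of unity follows from the invariant interpretation of the principal amplitude as a half-density on $\La$ twisted by the Maslov bundle.

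\textbf{Main obstacle.} The delicate step is the parameter version of the Morse splitting lemma. Pointwise, each $\Phi_j$ admits a quadratic normal form transversally to $C_j$, but carrying this out smoothly in $x$ and matching the two normal forms requires the transverse Hessian signatures to coincide. This invariance is not automatic from the non-degeneracy hypothesis alone; it must be deduced from the fact that the signature is a topological invariant of $\La$ closely related to the Maslov index discussed in Sec.~\ref{sec:21}. Verifying this matching, and carefully tracking the Maslov-type phase correction $e^{i\pi\sign Q/4}$ that appears in the $m$-equalization step, is where the real work lies; the substitution-of-variables and stationary-phase computations that produce the explicit amplitude formula are then routine.
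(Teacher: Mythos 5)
Your plan is essentially the classical H\"ormander equivalence-of-phase-functions argument, and as an outline it is sound; but note that the paper itself does \emph{not} prove Theorem~\ref{th0} -- it explicitly defers to \cite[Theorem~4 and Corollary~1]{arXiv} -- so the only in-paper proof to compare against is that of the special case, Theorem~\ref{th000}, in Appendix~\ref{sec:02-06}. That proof takes a different route from yours: instead of constructing a fibre-preserving change of parameters $\theta_2=T(x,\theta_1)$, the authors apply a partial Fourier transform (in $x_2$) to \emph{both} local representations, evaluate each side by the stationary phase method, and verify directly that the resulting leading symbols and Jacobian arguments agree, the identity $\mathcal{J}_s\det(P,P_\psi)=-\det(-\Phi'')$ doing the work that your determinant $\det(\pa T/\pa\theta_1)$ does. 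Their computation is more elementary and completely explicit (which is the stated point of the paper), but is tied to the specific phases $\tau(x,\psi)$ and $S(x_1,p_2)+p_2x_2$; your argument is the general one and would prove Theorem~\ref{th0} as stated. Two cautions on your version. First, the signature issue you flag is real but is resolved not by proving that the transverse Hessian signatures of $\Phi_1$ and $\Phi_2$ coincide (they need not), but by choosing the signatures of the adjoined quadratic forms $Q_1,Q_2$ in your first step so as to equalize $m_j-\sign\Phi_{j,\theta\theta}$; the residual discrepancy is a locally constant phase $e^{i\pi(\sigma_1-\sigma_2)/4}$ absorbed into the amplitude -- this is exactly the Maslov-index bookkeeping that appears as the $\arg\mathcal{J}_s$ computation in the paper's appendix. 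Second, your claim that the Gaussian factor $\abs{\det Q}^{-1/2}e^{i\pi\sign Q/4}$ ``exactly matches'' the change in the prefactor $(i/2\pi h)^{m/2}$ is not quite right (a residual constant phase $e^{i\pi(k+\sign Q)/4}$ survives); this is harmless for the assertion that the two classes of functions coincide, but it should be tracked if one wants the explicit amplitude transformation promised in the theorem.
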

In this general form, the theorem is not used in the present paper.
Hence we refer the reader for the proof (and a more detailed
statement concerning the transformation of amplitudes)
to~\cite[Theorem~4 and Corollary~1]{arXiv}. Theorem~\ref{th000}
below, which is a special case of Theorem~\ref{th0}, is proved in
detail in Appendix~\ref{sec:02-06}.
\end{small}
\medskip

Let us return to our example. The construction described above is
local in general, but for the manifold $\La^2$ we can readily find
a global defining function $\Phi(x,\theta)$. Let $x\in\mathbb{R}^2$
and $\psi\in \mathbb{S}^1$. On the straight line
\begin{equation*}
  \ell_\psi=\{y\in \mathbb{R}^2\colon y=X(\tau,\psi),
  \;\tau\in\mathbb{R}\},
\end{equation*}
take the point $x_*=X(\tau_*,\psi)$ nearest to~$x$, where
$\tau_*=\langle x,\bn(\psi)\rangle$ (sice the segment $[x,x_*]$ is
orthogonal to $\ell_\psi$). Using $\psi$ in the role of the
variable~$\theta$, set
\begin{equation}\label{fufu}
  \Phi(x,\psi)=\tau_*\equiv\langle x,\bn(\psi)\rangle.
\end{equation}
This phase function is nondegenerate, because
\begin{equation*}
  \Phi_\psi=\langle x,\bn'(\psi)\rangle, \qquad
 (\Phi_\psi)_x=\bn'(\psi)\equiv(-\sin\psi,\cos\psi)\ne0.
\end{equation*}
Next, the equation $\Phi_\psi=0$ means that $x$~and~$\bn'(\psi)$
are orthogonal; in other words, $x$~is collinear to~$\bn(\psi)$ and
hence lies on~$\ell_\psi$, so that $x=X(\tau_*,\psi)$. Moreover,
$\Phi_x(\tau_*,\psi)=\bn(\tau_*,\psi)=P(\tau_*,\psi)$, so that the
point $(x,\Phi_x(\tau_*,\psi))$ lies on~$\La^2$. It is easily seen
that every point in~$\La^2$ can be obtained in such a way, so that
the phase function~\eqref{fufu} globally defines the
manifold~$\La^2$.

The amplitude~$A(x,\psi)$ can be an arbitrary smooth function, so
that the rapidly oscillating functions associated with the
manifold~$\La^2$ have the form
\begin{equation}\label{RO1}
  u(x,h)=\BL(\frac i{2\pi h}\BR)^{1/2}
  \int_0^{2\pi}e^{\frac{i}{h}\langle x,\bn(\psi)\rangle}
  A(x,\psi)\,d\psi.
\end{equation}
Note, however, that this representation is asymptotically
nonunique: \textit{if one replaces $A(x,\tau)$ by any other smooth
function $A'(x,\tau)$ such that}
\begin{equation}\label{changeA}
 A(x,\psi)=A'(x,\psi) \quad \text {on the set}
 \qquad C_\Phi=\{(x,\psi)\colon \Phi_\psi\equiv\langle
 \mathbf{n}'(\psi),x\rangle=0\},
\end{equation}
\textit{then the integral \eqref{RO1} changes only by~$O(h)$}.

\medskip

\begin{small}
Indeed, if the amplitude is zero at the points where $\Phi_\psi=0$,
then it can be represented in the from of the product
$B(x,\psi)\Phi_\psi(x,\psi)$, and one can show by integrating by
parts in the integral~\eqref{RO1} that $u(x,h)=O(h)$.
\end{small}

\medskip

For example, it follows that the integral~\eqref{RO1} will not
change in the leading term of the asymptotics as $h\to0$ if one
replaces $A(x,\psi)$ by $A(x_*,\psi)$. Note that
\begin{equation*}
    A(x_*,\psi)=A(\tau_*\bn(\psi),\psi)
    =A(\langle x,\bn(\psi)\rangle\bn(\psi),\psi)
    =a(\langle x,\bn(\psi)\rangle,\psi),
\end{equation*}
where $a(\tau,\psi)=A(\tau\bn(\psi),\psi)$, so that the rapidly
oscillating function associated with $\La^2$ can be represented in
the form (cf.~\eqref{eq:01-4a})
\begin{equation}\label{eq:01-2}
  u(x,h)=\BL(\frac i{2\pi h}\BR)^{1/2}\int_0^{2\pi}
  e^{\frac{i}{h}\langle x,\bn(\psi)\rangle}
  a(\langle \mathbf{n}(\psi),x\rangle,\psi)\,d\psi.
\end{equation}
This representation, in contrast to~\eqref{RO1}, is asymptotically
unique.

In particular, if we drop the normalizing factor multiplying the
integral and take the constant $a(x,\psi)=(2\pi)^{-1}$ for the
amplitude, then we obtain the well-known integral representation
\begin{equation}\label{eq:01-3}
  \mathbf{J}_0\BL(\frac{\abs{x}}{h}\BR)=\frac1{2\pi}\int_0^{2\pi}e^{\frac{i}{h}\langle
x,\bn(\psi)\rangle}\,d\psi
\end{equation}
of the zero-order Bessel function as a special case of our
construction.

\begin{remark}
Sometimes it is convenient to choose the amplitude in a form
different from that in~\eqref{eq:01-2}. For example, using a change
of the form~\eqref{changeA}, one can transform~\eqref{eq:01-2} as
follows: Set $A(x,\psi)=a(|x|,\psi)$ if the function $a(\tau,\psi)$
is even in~$\tau$ and $A(x,\psi)=\langle \mathbf{n}(\psi),x\rangle
a(|x|,\psi)/|x|$ if $a(\tau,\psi)$ is odd in~$\tau$, and in the
general case split $a(\tau,\psi)$ into the odd and even parts and
set
\begin{equation*}
A=\frac{a(|x|,\psi)+a(-|x|,\psi)}{2}+\langle
\mathbf{n}(\psi),x\rangle\frac{a(|x|,\psi)-a(-|x|,\psi)}{2|x|}.
\end{equation*}
Then (again omitting the normalizing factor multiplying the
integral) modulo $O(h)$ we obtain
\begin{equation}\label{eq:01-2b}
\begin{split}
u(x,h)&=\int_0^{2\pi}e^{\frac{i}{h}\langle x,\bn(\psi)\rangle}
\Bigl(\frac{a(|x|,\psi)+a(-|x|,\psi)}{2}
\\&\qquad\qquad{}+\langle
\mathbf{n}(\psi),x\rangle\frac{a(|x|,\psi)-a(-|x|,\psi)}{2|x|}\Bigr)\,d\psi
\\&=
\int_0^{2\pi}e^{\frac{i}{h}\langle
x,\bn(\psi)\rangle}\frac{a(|x|,\psi)+a(-|x|,\psi)}{2}\,d\psi
\\&\qquad\qquad{}-i\frac{\pa}{\pa
k}\BL(\int_0^{2\pi}e^{ik\langle x,\bn(\psi)\rangle}
\frac{a(|x|,\psi)-a(-|x|,\psi)}{2|x|}\,d\psi\BR)\bigg|_{k=1/h}.
\end{split}
\end{equation}
In the general case, this representation looks more complicated
than~\eqref{eq:01-2}, but if, say, $a$ is independent of the
angle~$\psi$ altogether, $a=a(\tau)$, then it readily leads to
significant simplifications; we obtain
\begin{align}\nonumber
  u(x,h)&=
\pi \bl({a(|x|)+a(-|x|)}\br)\mathbf{J}_0\BL(\frac{\abs{x}}{h}\BR)-
i\pi\bl({a(|x|)-a(-|x|)}\br)
\mathbf{J}_0'\BL(\frac{\abs{x}}{h}\BR)
 \\\label{eq:01-2c}
 &=\pi
\bl({a(|x|)+a(-|x|)}\br)\mathbf{J}_0\BL(\frac{\abs{x}}{h}\BR)+
i\pi\bl({a(|x|)-a(-|x|)}\br) \mathbf{J}_1\BL(\frac{\abs{x}}{h}\BR)
,
\end{align}
where $\mathbf{J}_1(y)$ is the first-order Bessel function.
\end{remark}
\begin{remark}
When writing out formula~\eqref{eq:01-2}, we have ignored the
Maslov index in the singular chart by omitting the factor $e^{-i\pi
m_s/2}$. In fact, one can show that $m_s=0$ in this example
provided that one chooses a nonsingular initial point on~$\La^2$
with coordinates $(\tau,\psi)$, $\tau>0$.
\end{remark}
\begin{remark}
Let us compare the integrals~\eqref{eq:01-2} and~\eqref{eq:01-2b}
for the case in which $a=\tau^2$. Formula~\eqref{eq:01-2b} gives
$2\pi|x|^2\mathbf{J}_0\bl(\frac{\abs{x}}{h}\br)$, while formula
\eqref{eq:01-2b} gives
\begin{align*}
\int_0^{2\pi}e^{\frac{i}{h}\langle x,\bn(\psi)\rangle} (\langle
\mathbf{n}(\psi),x\rangle)^2\,d\psi &= -2\pi\frac{\pa^2}{\pa
(1/h)^2}\mathbf{J}_0\BL(\frac{\abs{x}}{h}\BR)
\\&=-2\pi|x|^2\mathbf{J}_0''\BL(\frac{\abs{x}}{h}\BR)=
2\pi|x|^2\mathbf{J}_0\BL(\frac{\abs{x}}{h}\BR)-2\pi
{|x|}{h}\mathbf{J}_1\BL(\frac{\abs{x}}{h}\BR).
\end{align*}
Thus, the difference between the two representations is $-2\pi
{|x|}{h}\mathbf{J}_1\bl(\frac{\abs{x}}{h}\br)=O(h)$.
\end{remark}

\section{GENERAL FORMULAS FOR CANONICAL OPERATOR
IN 2D~CASE}\label{sec:22}

We proceed to the description of general formulas and constructions
in the two-di\-men\-sional case. Let $\La^2$~be a Lagrangian
manifold in the four-dimensional phase space~$\mathbb{R}_{px}^4$.
The functions specifying the manifold~$\La^2$ (i.e., the embedding
$\La\subset {\mathbb{R}}_{px}^4$) will be denoted by $x=X(\a)$,
$p=P(\a)$, where $\alpha=(\alpha_1,\alpha_2)$ are coordinates on
$\La^2$. To simplify the notation we denote points of~$\La^2$
by~$\a$ as well.

\subsection{Eikonal \rom(action\rom)}\label{sec:02-01}

Since $\La^2$ is Lagrangian, it follows that the Pfaff equation
\begin{equation}\label{eq:02-eiko}
    d\tau(\a)=P(\a)\,dX(\a)\equiv
    P_1(\a)\,dX_1(\a)+P_2(\a)\,dX_2(\a)
\end{equation}
is locally solvable on $\La$. (More precisely, it is solvable in an
arbitrary simply connected domain $U\subset \La^2$). A real
solution $\tau(\a)$ of Eq.~\eqref{eq:02-eiko} is called an
\textit{eikonal} (or \textit{action}) in $U$; if $U$ is connected,
then the eikonal is defined up to an additive constant.

\subsection{Eikonal coordinates}\label{sec:02-02}

From now on, we assume that $\La^2$ satisfies the following
\begin{condition}\label{co:02-1}
The form $P(\a)\,dX(\a)$ is nonzero for every $\a\in\La^2$.
\end{condition}
Thus, if $\tau$ is an eikonal in a neighborhood $U$ of some point
of $\La^2$, then $d\tau\ne0$, and hence (provided that $U$ is
sufficiently small) we can supplement $\tau$ with another function
$\psi$ such that $(\tau,\psi)$ is a coordinate system in $U$. A
coordinate system of this kind will be called an \textit{eikonal
coordinate system}. We see that there exists an eikonal coordinate
system in a neighborhood of an arbitrary point $\a\in\La^2$. The
expressions of the functions $(X(\a),P(\a))$ via eikonal
coordinates will be denoted by $(X(\tau,\psi),P(\tau,\psi))$
(rather than the technically correct
$(X(\a(\tau,\psi)),P(\a(\tau,\psi)))$) or even simply by $(X,P)$
with the arguments omitted. The same notation will  be used for
other functions on $\La^2$.

\begin{lemma}\label{pr:02-01}
In eikonal coordinates, one has the relations
\begin{equation}\label{eq:02-01}
    \langle P,X_\tau\rangle=1,\qquad
    \langle P,X_\psi\rangle=0,\qquad
    \langle P_\psi,X_\tau\rangle=\langle P_\tau,X_\psi\rangle.
\end{equation}
\end{lemma}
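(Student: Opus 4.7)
The plan is to derive all three identities by writing out the defining Pfaff equation \eqref{eq:02-eiko} in the eikonal coordinates $(\tau,\psi)$ and then differentiating. There is no real obstacle here: the first two relations are essentially a restatement of the fact that $\tau$ is a coordinate, and the third follows by cross-differentiation.

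First I would pull back the 1-form identity $d\tau = P\,dX = P_1\,dX_1 + P_2\,dX_2$ to the coordinate chart $(\tau,\psi)$. On the left side, $d\tau$ is just the coordinate differential. On the right side, write $dX = X_\tau\,d\tau + X_\psi\,d\psi$ (componentwise), so that
\begin{equation*}
  P\,dX = \langle P, X_\tau\rangle\,d\tau + \langle P, X_\psi\rangle\,d\psi.
\end{equation*}
Comparing coefficients of $d\tau$ and $d\psi$ immediately yields the first two relations $\langle P,X_\tau\rangle = 1$ and $\langle P,X_\psi\rangle = 0$.

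For the third relation, I would differentiate these two scalar identities with respect to the other coordinate. Differentiating $\langle P,X_\tau\rangle = 1$ with respect to $\psi$ gives
\begin{equation*}
  \langle P_\psi, X_\tau\rangle + \langle P, X_{\tau\psi}\rangle = 0,
\end{equation*}
and differentiating $\langle P,X_\psi\rangle = 0$ with respect to $\tau$ gives
\begin{equation*}
  \langle P_\tau, X_\psi\rangle + \langle P, X_{\psi\tau}\rangle = 0.
\end{equation*}
Subtracting these two identities and using the equality of mixed partials $X_{\tau\psi}=X_{\psi\tau}$ yields $\langle P_\psi,X_\tau\rangle = \langle P_\tau,X_\psi\rangle$, as required.

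A conceptual remark: the third equality is also the pullback of the symplectic form $\sum_i dp_i\wedge dx_i$ to $\Lambda^2$, which must vanish since $\Lambda^2$ is Lagrangian. The approach above packages this Lagrangian condition via the existence of the eikonal $\tau$ satisfying \eqref{eq:02-eiko}, avoiding any separate appeal to the symplectic 2-form. The only mildly delicate point is bookkeeping the componentwise sum in $P\,dX$; once that is written out, everything is a one-line calculation.
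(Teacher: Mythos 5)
Your proof is correct. The paper states this lemma without proof (evidently regarding it as immediate), and your argument is exactly the standard one it presupposes: the first two identities come from comparing the coefficients of $d\tau$ and $d\psi$ in the pulled-back Pfaff equation $d\tau=P\,dX$, and the third is $d^2\tau=0$ written out via cross-differentiation, which is equivalent to the vanishing of the pullback of $\sum_i dp_i\wedge dx_i$, i.e.\ the Lagrangian property.
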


\subsection{Measure and Jacobians}\label{sec:02-03}

As was already noted, to construct the canonical operator, one
needs not only the Lagrangian manifold $\La^2$ but also some real
measure on this manifold. We assume that the measure is represented
by a volume form \footnote{Thus, we assume that $\La^2$ is
orientable; the theory may pretty well be constructed without this
assumption, which we only make to simplify the exposition.} $d\mu$.
In eikonal coordinates $(\tau,\psi)$ on $\La^2$, one has
$d\mu=\mu\,d\tau\wedge d\psi$, where $\mu\equiv\mu(\tau,\psi)$ is a
smooth nonvanishing function called the \textit{density} of the
measure $d\mu$ in the coordinates $(\tau,\psi)$. Note that in many
physical problems the coordinate~$\tau$ is the so-called ``proper
time'' and $\mu=1$. We introduce the Jacobians
\begin{align}\label{eq:02-Je0}
\mathcal{J}&=\frac{D(X)}{D\mu}=\frac{1}{\mu}\det\frac{\pa (X_1,X_2)
}{\pa(\tau,\psi)}=\frac{1}{\mu}\det(X_\tau,X_\psi),\quad \widetilde{\mathcal{J}}=
\mu\det (P,P_\psi)
\\ \label{eq:02-Je}
\mathcal{J}^{\e}&=\frac{D(X-i\e P)}{D\mu}
  \overset{\text{def}}{=}\frac{1}{\mu}\det\frac{\pa (X_1-i\e P_1,X_2-i\e P_2) }{\pa(\tau,\psi)},
  \qquad\e\in[0,1].
\end{align}
In contrast to $\mu$, the Jacobians \eqref{eq:02-Je0} and
\eqref{eq:02-Je} are \textit{independent} of the specific choice of
eikonal coordinates and hence are well defined globally on $\La^2$.
\begin{lemma}\label{pr:02-02}
One has
\begin{equation}\label{eq:02-02}
  \abs{\cJ}=\frac{|{X_\psi}|}{{\mu}|{P}|};\qquad
   \cJ^{\e}(\a)\ne0 \quad\text{for any $\a\in\La$ and $\e>0$}.
\end{equation}
\end{lemma}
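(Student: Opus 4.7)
The two assertions are independent and I would handle them separately.

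For the first identity, my plan is to exploit the orthogonality $\langle P, X_\psi\rangle = 0$ from Lemma~\ref{pr:02-01}: in the plane this forces $X_\psi = \delta P^\perp$ for some scalar $\delta$, where $P^\perp = (-P_2, P_1)^T$ satisfies $|P^\perp| = |P|$, so $|X_\psi| = |\delta||P|$. Combined with the elementary identity $\det(u, P^\perp) = \langle u, P\rangle$ and with $\langle P, X_\tau\rangle = 1$ (also from Lemma~\ref{pr:02-01}), one gets $\mu\cJ = \det(X_\tau, X_\psi) = \delta\det(X_\tau, P^\perp) = \delta$, and taking absolute values produces the claimed formula.

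For the second claim I would introduce the $2\times 2$ matrices $M = (X_\tau \mid X_\psi)$ and $N = (P_\tau \mid P_\psi)$, so that $\mu\cJ^\e = \det(M - i\e N)$. The key observation is that the third relation in Lemma~\ref{pr:02-01} is exactly the statement that $H := M^T N$ is symmetric, while $G := M^T M$ is automatically symmetric and positive semidefinite. Assuming for contradiction that $\det(M - i\e N) = 0$, I would pick a nonzero kernel vector and split it as $v_1 + iv_2$ with $v_j \in \mathbb{R}^2$; the real and imaginary parts give the pair $M v_1 = -\e N v_2$ and $M v_2 = \e N v_1$. Left-multiplying each equation by $M^T$ and then taking inner products with $v_1$ and $v_2$ respectively, the symmetry of $H$ makes the cross terms $\pm\e v_j^T H v_k$ cancel in the sum, leaving $v_1^T G v_1 + v_2^T G v_2 = 0$. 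Positive semidefiniteness of $G$ then forces $Mv_1 = Mv_2 = 0$, and substituting back into the original system (using $\e > 0$) yields $Nv_1 = Nv_2 = 0$ as well.

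The contradiction now follows from the fact that $(\tau, \psi)$ is a local coordinate system on $\La^2$: the $4\times 2$ matrix whose columns are the vectors $(X_\tau, P_\tau)^T$ and $(X_\psi, P_\psi)^T$ is the Jacobian of the embedding $\La^2 \hookrightarrow \mathbb{R}_{px}^4$ and therefore has rank $2$, which forces $v_1 = v_2 = 0$. The only conceptual step, and the one I expect to be the main obstacle if overlooked, is recognizing that the Lagrangian property of $\La^2$ translates precisely into the symmetry of $H$; once that is in hand, the rest is routine linear algebra.
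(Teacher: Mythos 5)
Your proof is correct and follows essentially the same route as the paper: the first identity rests on the orthogonality and normalization relations of Lemma~\ref{pr:02-01} (your decomposition $X_\psi=\delta P^\perp$ is just the dual of the paper's expansion $X_\tau=aX_\psi+bP$), and the second is the paper's argument with the complex kernel vector written out as $v_1+iv_2$ — the same key ingredients of the symmetry of $M^TN$ coming from the Lagrangian property, positive semidefiniteness of $M^TM$, and the rank-$2$ condition on the Jacobian of the embedding.
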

\begin{proof}
(i) (cf.~\cite{DShT}). The first relation is obviously true if
$X_\psi=0$. Otherwise, it suffices to take into account the
relation $X_\tau=aX_\psi+bP$, $b=1/P^2$, which follows from
\eqref{eq:02-01}, and also the second relation in~\eqref{eq:02-01}.

(ii) (see, e.g.  \cite{Mas6}) Assume that the matrix
$X_\alpha-i\varepsilon P_ \alpha$ is degenerate. Then there exists
a vector $\xi\neq0$ such that $X_\alpha\xi=i\varepsilon P_
\alpha\xi$. The Lagrangian property implies that $P_ \alpha^T
X_\alpha-X_ \alpha^T P_\alpha=0$, where the symbol~$T$ stands for
transposition. It follows that $P_ \alpha^T X_\alpha\xi-X_ \alpha^T
P_\alpha\xi= i(\varepsilon P_ \alpha^TP_
\alpha+\frac{1}{\varepsilon}X_ \alpha^TX_ \alpha)\xi=0$. The
matrices $P_ \alpha^TP_ \alpha$, $X_ \alpha^TX_ \alpha$ are
nonnegative, and hence the last relation holds if and only if $P_
\alpha\xi=X_ \alpha\xi=0$. It follows that the rank of $4\times 2$
matrix $\begin{pmatrix}P_ \alpha \\X_ \alpha\end{pmatrix}$ is less
than 2, which contradicts to assumption that the dimension of
$\Lambda^2$ is~2.
\end{proof}

\subsection{Maslov index of regular points and closed paths}

Fix a regular point $\alpha_0\in \Lambda^2$, which we call the
\textit{central point}. Without loss of generality, we assume that
$\cJ(\a_0)>0$. Next, let $\alpha\in \Lambda$ be an arbitrary
nonsingular point. Fix some path $\gamma(\alpha_0,\alpha)\in
\Lambda^2$ joining $\a_0$ with $\alpha$ and define the
\textit{Maslov index} of~$\a$ by the formula
\begin{equation}\label{Indreg0}
m(\alpha)=\frac{1}{\pi}\lim_{\varepsilon\to+0} \operatorname{Arg}
\mathcal{J}^\varepsilon\big|_{\alpha_0}^\alpha,
\end{equation}
where the increment of the argument is taken along
$\gamma(\alpha_0,\alpha)$. In practice, it is better to use the
integral formula
\begin{equation}\label{Indreg1}
m(\alpha)=\frac{1}{\pi}\lim_{\varepsilon\to+0}
\im\int_{\gamma(\alpha_0,\alpha)}
\frac{d\,\mathcal{J}^\varepsilon}{\mathcal{J}^\varepsilon}.
\end{equation}
The index $m(\alpha)$ is an integer \textit{depending on the choice
of the path} $\gamma(\alpha,\alpha_0)$ (and remaining constant
under continuous deformations of the path). In particular,
$m(\a_0)=0$ provided that for the path joining~$\a_0$ with itself
one takes a path homotopic to the trivial path (which does not
leave~$\a_0$ at all).

Let $\gamma$ be some closed path on~$\La^2$; then we can define the
\textit{Maslov index $m(\gamma)$ of} $\gamma$ by setting
\begin{equation}\label{Indclosed}
\operatorname{ind} \gamma= \frac{1}{\pi}\lim_{\varepsilon\to+0}
\operatorname{Arg}_\gamma
\mathcal{J}^\varepsilon\big|_{\alpha_0}^{\alpha_0}=
\frac{1}{\pi}\lim_{\varepsilon\to+0} \operatorname{Im} \oint_\gamma
\frac{d\,\mathcal{J}^\varepsilon}{\mathcal{J}^\varepsilon} =
\frac{1}{\pi i} \oint_\gamma
\frac{d\,\mathcal{J}^\varepsilon}{\mathcal{J}^\varepsilon},\qquad\e>0.
\end{equation}

\begin{example}\label{3.1}
Let $\La^2$ be the Lagrangian manifold~\eqref{eq:01-1}. Fix the
central point $\alpha_0=(\tau=\delta,\psi=0)$, $\delta\to+0$ (i.e.,
$\delta=+0$). One can readily find that
\begin{equation*}
 \mathcal{J}^\varepsilon=\det\begin{pmatrix}
\cos\psi & (\tau-i\varepsilon)\sin\psi \\-\sin\psi&
(\tau-i\varepsilon)\cos\psi
\end{pmatrix}=(\tau-i\varepsilon).
\end{equation*}
Thus, the points with nonzero coordinate~$\tau$ are regular. For
the index~$m(\alpha)$, we have
\begin{equation*}
m(\alpha)=\frac{1}{\pi}\lim_{\varepsilon\to 0}\operatorname{Im}
\int_{(\delta,0)}^{(\tau,\psi)}\frac{d\tau} {\tau-i\varepsilon}
=\frac{1}{\pi}\lim_{\varepsilon\to
0}\Bl(\arctan\Bl(\frac{\tau}{\varepsilon}\Br)-
\arctan\Bl(\frac{\delta}{\varepsilon}\Br)\Br).
\end{equation*}
The last expression is $0$ if $\tau>0$ and $-1$ if $\tau<0$ for any
positive $\delta$. Thus, $m(\a(\tau,\psi))=0$ for $\tau>0$ and
$m(\a(\tau,\psi))=-1$ for $\tau<0$. By the way, note that the index
is independent of the choice of the path in this example.
\end{example}

\subsection{Nonsingular and singular charts.
Canonical atlas}\label{sec:02-04aa}

Maslov's canonical operator $K=K^h_{(\La^2,\mu)}$ associates a
rapidly oscillating function $u(x,h)=[K^h_{(\La^2,\mu)} A](x,h)$ to
every function $A\in C_0^\infty(\La^2)$.\footnote{The condition
that $A$ is compactly supported is convenient when describing the
general theory. If the projection $\pi_x\colon\La^2\lra
\mathbb{R}_x^2$ is proper (i.e., the preimage of every compact set
is compact), then one can safely replace $C_0^\infty(\La^2)$ with
$C^\infty(\La^2)$. This is what we do in our examples, where the
function on which the canonical operator acts is \textit{not}
compactly supported.} It is convenient to split the definition of
the canonical operator into two parts, local and global. In the
local definition, the manifold $\La^2$ is covered by special
connected simply connected domains, which we will refer to as
\textit{canonical charts}, and the canonical operator is defined
separately in each chart (i.e., on functions supported in that
chart). To pass to the global definition, the local canonical
operators are compared on the intersections of charts and pasted
together with the help of a partition of unity.

There are two kinds of canonical charts, nonsingular and singular.

\textbf{Nonsingular charts.} A point $\a\in\La^2$ is said to be
\textit{nonsingular} if $\cJ(\a)\ne0$. Accordingly, a
\textit{nonsingular chart} is an arbitrary connected simply
connected domain $U\subset\La^2$ consisting of nonsingular points.
Since $\cJ(\a)\ne0$, it follows that there exists a smooth solution
$\alpha(x)=(\tau(x),\psi(x))$ of the system of equations
\begin{equation}\label{Solx}
   X(\alpha)=x\Longleftrightarrow X(\tau,\psi)=x.
\end{equation}
By solving this system, one passes from the coordinates
$\alpha=(\tau,\psi)$ on $\La^2$ to the coordinates $x=(x_1,x_2)$ on
the configuration space.

\textbf{Singular charts.} By definition, nonsingular charts exhaust
all $\La^2$ except for the focal (or singular) points where
$\cJ(\a)=0$. Near the focal points, we need a different kind of
charts. Let $\a^*\in\La$ be a focal point. Take some system
$(\tau,\psi)$ of eikonal coordinates on $\La$ in a neighborhood of
$\a^*$. The coordinates of $\a^*$ will be denoted by
$(\tau^*,\psi^*)$. Consider the equation
\begin{equation}\label{eq:02-08}
    \langle P(\tau,\psi),x-X(\tau,\psi)\rangle=0.
\end{equation}
\begin{lemma}
Equation~\eqref{eq:02-08} defines a smooth function
\begin{equation}\label{e04a}
    \tau=\tau(x,\psi)
\end{equation}
in a neighborhood of the point $(x^*,\psi^*)\in\RR^3$, where
$x^*=X(\tau^*,\psi^*)$, such that $\tau^*=\tau(x^*,\psi^*)$.
\end{lemma}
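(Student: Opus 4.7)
The plan is to apply the implicit function theorem to the defining equation, with the key input being the first identity in Lemma~\ref{pr:02-01}.

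First, I would introduce the auxiliary function
\begin{equation*}
F(\tau,x,\psi) = \langle P(\tau,\psi),\, x - X(\tau,\psi)\rangle,
\end{equation*}
so that Eq.~\eqref{eq:02-08} is exactly $F(\tau,x,\psi)=0$. Since $x^{*}=X(\tau^{*},\psi^{*})$, the factor $x-X$ vanishes at $(\tau^{*},x^{*},\psi^{*})$, and hence $F(\tau^{*},x^{*},\psi^{*})=0$, which verifies the existence of a base solution.

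Next I would compute the partial derivative with respect to $\tau$:
\begin{equation*}
\pd{F}{\tau} = \langle P_\tau,\, x - X\rangle - \langle P,\, X_\tau\rangle.
\end{equation*}
Evaluating at $(\tau^{*},x^{*},\psi^{*})$, the first term vanishes because $x^{*}-X(\tau^{*},\psi^{*})=0$, while by the first identity of Lemma~\ref{pr:02-01} (valid in eikonal coordinates) $\langle P,X_\tau\rangle\equiv 1$. Thus $\pd{F}{\tau}\big|_{(\tau^{*},x^{*},\psi^{*})} = -1 \neq 0$.

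The implicit function theorem then yields a smooth function $\tau=\tau(x,\psi)$ defined in some neighborhood of $(x^{*},\psi^{*})\in\mathbb{R}^3$, uniquely solving $F(\tau(x,\psi),x,\psi)=0$ and satisfying $\tau(x^{*},\psi^{*})=\tau^{*}$. Smoothness of $\tau(x,\psi)$ follows from the smoothness of $P$ and $X$ in eikonal coordinates. There is no real obstacle here; the entire content is recognizing that Condition~\ref{co:02-1} guarantees the eikonal coordinates exist in the first place, and that the normalization $\langle P,X_\tau\rangle=1$ from Lemma~\ref{pr:02-01} supplies exactly the nonvanishing Jacobian needed for the implicit function theorem, independently of whether $\alpha^{*}$ is a focal point.
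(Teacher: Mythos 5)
Your proof is correct and is exactly the intended argument: the paper omits a printed proof of this lemma, but its Appendix~1 computation of $\tau_\psi$ and $\tau_x$ (with the denominator $1-\langle x-X,P_\tau\rangle$) rests on the very same implicit-function-theorem step, with $F_\tau=\langle P_\tau,x-X\rangle-\langle P,X_\tau\rangle=-1$ at the base point via the first identity of Lemma~\ref{pr:02-01}.
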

\begin{lemma}
There exists a neighborhood $W$ of the point $(x^*,\psi^*)\in\RR^3$
such that the following conditions hold\rom:

\rom{(i)} The differential $d(\tau_{\psi})$ is nonzero at each
point of the set
\begin{equation}\label{eq:02-07}
    \Pi=\{(x,\psi)\in W\colon \tau_{\psi}(x,\psi)=0\},
\end{equation}
which is therefore a smooth two-dimensional surface.

\rom{(ii)} The mapping $(x,\psi)\longmapsto (x,\tau_x(x,\psi))$ is
a diffeomorphism of $\Pi$ onto a neighborhood $U\subset\La$ of the
point $\a_0$ in $\La$.

\rom{(iii)} One has
$\det(P,P_\psi)
    \big|_{\tau=\tau(x,\psi)}\ne0,\qquad (x,\psi)\in W$.
\end{lemma}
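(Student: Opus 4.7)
The plan is to establish part (iii) first, then use it to prove (i) and (ii). The key preliminary observation at the focal point $\a^*=(\tau^*,\psi^*)$ is that $\cJ(\a^*)=0$ combined with $\langle P,X_\tau\rangle=1$ from Lemma~\ref{pr:02-01} forces $X_\psi(\a^*)=0$: indeed, $X_\psi$ must be a multiple of $X_\tau$ (as $\cJ=0$ makes $(X_\tau,X_\psi)$ degenerate and $X_\tau\ne0$), but $\langle P,X_\psi\rangle=0$ then forces the multiplier to vanish. Since $\dim\La^2=2$, necessarily $P_\psi(\a^*)\ne0$. Moreover, the Lagrangian identity $\langle P_\psi,X_\tau\rangle=\langle P_\tau,X_\psi\rangle=0$ shows that $P_\psi\perp X_\tau$ at $\a^*$.

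For (iii), suppose $P_\psi=cP$ at $\a^*$. Then $0=\langle P_\psi,X_\tau\rangle=c\langle P,X_\tau\rangle=c$, contradicting $P_\psi\ne0$. Hence $\det(P,P_\psi)(\a^*)\ne0$, and by continuity of $(x,\psi)\mapsto\det\bl(P(\tau(x,\psi),\psi),P_\psi(\tau(x,\psi),\psi)\br)$ this persists in a neighborhood $W$ of $(x^*,\psi^*)$.

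For (i), I would implicitly differentiate $\langle P(\tau,\psi),x-X(\tau,\psi)\rangle=0$ and use Lemma~\ref{pr:02-01} to obtain
\begin{equation*}
\tau_\psi=\frac{\langle P_\psi,x-X\rangle}{1-\langle P_\tau,x-X\rangle},\qquad \tau_{x_i}=\frac{P_i}{1-\langle P_\tau,x-X\rangle}.
\end{equation*}
At $(x^*,\psi^*)$ the denominator equals $1$ and the numerator equals $0$, so a short calculation (differentiating, keeping only terms that survive on the zero set) gives $d\tau_\psi=\langle P_\psi,dx\rangle-\langle P_\psi,X_\tau\rangle(P\cdot dx)-\langle P_\psi,X_\psi\rangle\,d\psi=\langle P_\psi,dx\rangle$, the last two terms vanishing by $\langle P_\psi,X_\tau\rangle=0$ and $X_\psi=0$. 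Since $P_\psi(\a^*)\ne0$, $d\tau_\psi\ne0$ at $(x^*,\psi^*)$; shrinking $W$, this extends to all of $W$, so $\Pi$ is a smooth 2-surface.

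For (ii), on $\Pi$ one has both $\langle P,x-X\rangle=0$ (defining equation) and $\langle P_\psi,x-X\rangle=0$ (from $\tau_\psi=0$ with nonzero denominator, guaranteed on $W$). By (iii), $\{P,P_\psi\}$ spans $\RR^2$, so $x=X(\tau(x,\psi),\psi)$ on $\Pi$, and then $\tau_x=P$. Thus $F(x,\psi)=(x,\tau_x(x,\psi))=(X(\tau,\psi),P(\tau,\psi))\in\La$. To see $F|_\Pi$ is a diffeomorphism onto a neighborhood of $\a^*$ in $\La$, I would construct the inverse $G\colon\a\mapsto(X(\a),\psi(\a))$ on a small neighborhood of $\a^*\in\La$ in eikonal coordinates. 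That $G$ actually maps into $\Pi$ follows by differentiating the identity $\tau(X(\tau_0,\psi),\psi)\equiv\tau_0$ in $\psi$, which yields $\tau_\psi(X(\tau,\psi),\psi)=-P(\tau,\psi)\cdot X_\psi(\tau,\psi)=0$ by Lemma~\ref{pr:02-01}. A direct substitution gives $F\circ G=\mathrm{id}$ and $G\circ F=\mathrm{id}$, and both maps are smooth, completing (ii). The main obstacle is the delicate computation in (i): since both $\tau_\psi$ and its numerator vanish at $\a^*$, one cannot differentiate the explicit formula term by term without tracking cancellations, and the conclusion hinges on the vanishing of $X_\psi$ at the focal point — precisely the degeneracy that $\Pi$ is designed to resolve.
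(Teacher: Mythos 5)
Your proof is correct. The paper states this lemma without proof, but your argument assembles exactly the ingredients the paper itself uses elsewhere --- the implicit-differentiation formulas for $\tau_\psi$ and $\tau_x$ and the deduction $x=X$ from $\det(P,P_\psi)\ne0$ appear in Appendix~1, and the key facts $X_\psi(\a^*)=0$, $P_\psi(\a^*)\ne0$ at a focal point appear in Appendix~2 --- so this is evidently the intended argument.
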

The domain $U\subset \La^2$, together with the eikonal coordinates
$(\tau,\psi)$ and the function \eqref{e04a} defined on $W$, will be
called a \textit{singular chart} on $\La^2$. Without loss of
generality, we assume that $U$ and $W$ are connected and simply
connected.

\textbf{Canonical atlas.} It follows from the preceding that
$\La^2$ can be covered by nonsingular and singular charts. Let us
take and fix a locally finite cover $\La^2=\bigcup_{j}U_j$ of
$\La^2$ by nonsingular and singular charts. We assume that the
intersection of two arbitrary sets $U_j$ is connected and simply
connected. (Of course, it can in particular be empty.) Such a cover
is called a \textit{canonical atlas}, and we only deal with these
charts $U_j$, which will be called \textit{canonical charts}.
Without loss of generality, we assume that there exist eikonal
coordinates in each of the charts $U_j$.\footnote{In many physical
problems, eikonal coordinates can be introduced globally on the
entire~$\La^2$; see examples above and below.}

\subsection{Canonical operator
in a nonsingular chart}\label{sec:02-04}

In this case, the definition of the canonical operator coincides
with the standard one. Let $U_j\subset\La^2$ be a nonsingular
chart. It readily follows from the implicit function theorem that
$U_j$ is diffeomorphically projected onto some open subset of
$\mathbb{R}_x^2$, and so the variables $x=(x_1,x_2)$ can be used as
local coordinates in $U_j$ and $\a=\a(x)$ can be defined as the
solution of Eqs.~\eqref{Solx}.

Choose an eikonal $\tau$ and an additional coordinate~$\psi$
in~$U_j$. We define the \textit{Maslov index~$m_j$ in $U_j$} by
setting $m_j=m(\alpha_j)$. Now we construct the canonical operator
$K_j$ on functions $A_j(\alpha)=A_j(\tau,\psi)\in C_0^\infty(U_j)$
in nonsingular charts by the formula
\begin{equation}\label{eq:02-05}
    [K_j A_j](x,k)=
    \frac{e^{\frac{i}{h}\tau(\a)-i \frac{\pi m_j}{2}} A_j(\a)}
                  {\sqrt{\abs{\mathcal{J}(\a)}}}\bigg|_{\a=\a(x)}\equiv e^{\frac{i}{h}\tau(x)-i \frac{\pi m_j}{2}}A_j(\tau,\psi)
    \sqrt{\frac{{\mu(\tau,\psi)}\abs{P(\tau,\psi)}}
    {\abs{X_\psi(\tau,\psi)}}}
    \bigg|_{\substack{\tau=\tau(x)\\\psi=\psi(x)}}.
\end{equation}

\subsection{Canonical operator
in a singular chart}\label{sec:02-05}

Now let $U_j\subset\La^2$ be a singular chart with eikonal
coordinates $(\tau,\psi)$. Consider the Jacobians $\cJ$ and
$\tilde{\cJ}=\det(P(\tau,\psi),P_\psi(\tau,\psi))$ at some
nonsingular point $(\tau^j,\psi^j)$. The second expression is
nonzero everywhere in $U_j$, in contrast to $\cJ(\tau,\psi)$  which
can change (and usually changes) its sign. \textit{We define the
Maslov index} $m_j$ \textit{of the singular chart} $U_j$ \textit{by
setting} $m_j=m(\alpha_j)$ \textit{if $\cJ(\tau,\psi)=\det\frac{\pa
X}{\pa(\tau,\psi)}$ and $\tilde {\cJ}(\tau,\psi)=\det(P,P_\psi)$
have the same sign and} $m_j=m(\alpha_j)+\pi$ \textit{if they have
opposite signs}.

\begin{example}\label{3.2}
Consider the manifold \eqref{eq:01-1} again. For the singular chart
$U_{sing}$ we take a neighborhood of the circle $\tau=0
\,\Leftrightarrow\, p=\mathbf{n}(\psi),x=0$ defined by the
inequality $|\tau|<\mu,$ where $\mu$ is a positive number. We have
$\mathcal{J}=\tau$ and $\tilde{\mathcal{J}}=\det(P,P_{\psi})=1$.
Thus, if we take a point $\a=(\tau,\psi)$ with positive $\tau$,
then the signs of~$\mathcal{J}$ and~$\tilde{\mathcal{J}}$ coincide
and $m(U_{sing})=0$. It is easily seen that the result will be the
same if one takes a point $\a=(\tau,\psi)$ with negative $\tau$.

Now we define the action of the canonical operator in the singular
chart $U_j$ on a function $A_j\in C_0^\infty(U_j)$ by the formula
\begin{equation}\label{eq:02-10}
    [K_j A_j](x,h)=\BL(\frac{i}{2\pi h}\BR)^{1/2}e^{-i\frac{\pi m_j}{2}}
    \int _{\mathbb{R}}\left[e^{\frac{i}{h}
    \tau }A_j(\tau,\psi)\sqrt{\mu \abs{\det(P,P_\psi)}}
      \right]_{\tau=\tau(x,\psi)}
      d\psi,
\end{equation}
where $\arg i=\pi/2$.
\end{example}

\begin{theorem}\label{th000}
The singular canonical operator~\eqref{eq:02-10} coincides modulo
$O(h)$ with the standard Maslov canonical operator
\rom{\cite{Mas1,MaFe1}}. In particular, the Maslov index of the
singular chart~$U_j$ coincides modulo~$4$ with the Maslov index of
the corresponding canonical chart in the standard construction of
the canonical operator.
\end{theorem}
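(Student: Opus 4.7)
The plan is to recognize \eqref{eq:02-10} as a specific instance of the general Fourier integral operator representation \eqref{RO}, with a single phase variable $\theta=\psi$ and phase function $\Phi(x,\psi):=\tau(x,\psi)$ defined implicitly by $\langle P(\tau,\psi),x-X(\tau,\psi)\rangle=0$ as in the first lemma of Section~\ref{sec:02-05}. The equivalence with the standard partial-Fourier-transform Maslov canonical operator then follows from the invariance principle of Theorem~\ref{th0}, once we check three ingredients: (i)~nondegeneracy of $\Phi$; (ii)~that $\Phi$ globally defines the piece of $\La^2$ over the singular chart; and (iii)~that the amplitude and the phase factor $e^{-i\pi m_j/2}$ in \eqref{eq:02-10} are correctly normalized so that the resulting oscillatory function agrees modulo $O(h)$ with the standard one.

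Items (i) and (ii) follow almost directly from the structural lemmas in Section~\ref{sec:02-05}. Differentiating the defining relation $\langle P,x-X\rangle=0$ in $x$ and using \eqref{eq:02-01}, one finds $\Phi_x=P(\tau,\psi)\big|_{\tau=\tau(x,\psi)}$. Differentiating in $\psi$ gives $\Phi_\psi=\langle P_\psi,x-X\rangle$, and items (i)--(iii) of the second lemma ensure that the critical set $\Pi=\{\Phi_\psi=0\}$ is a smooth surface which maps diffeomorphically onto $U_j\subset\La^2$ via $(x,\psi)\mapsto(x,\Phi_x)$. Thus $\Phi$ is a nondegenerate phase function defining $\La^2|_{U_j}$.

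For item (iii), I would apply the stationary phase method to \eqref{eq:02-10} at a nonsingular point $\a=\a(x)\in U_j$ and compare the result with \eqref{eq:02-05}. Implicitly differentiating $\langle P,x-X\rangle=0$ twice and using \eqref{eq:02-01}, one expresses the Hessian $\Phi_{\psi\psi}$ at the critical point in terms of $\det(P,P_\psi)$ and $\cJ$; after simplification the combination $\sqrt{\mu|\det(P,P_\psi)|}\cdot|\Phi_{\psi\psi}|^{-1/2}$ reduces to $\sqrt{\mu|P|/|X_\psi|}=|\cJ|^{-1/2}$ by Lemma~\ref{pr:02-02}, which is precisely the amplitude prefactor of \eqref{eq:02-05}. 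The exponential phase $e^{\frac{i}{h}\tau}$ at the critical point reproduces the eikonal factor of \eqref{eq:02-05}. Thus, away from focal points, \eqref{eq:02-10} and \eqref{eq:02-05} agree modulo $O(h)$, which is the content of the first part of the theorem.

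The main obstacle is the Maslov-index bookkeeping. Stationary phase produces an additional factor $e^{i\pi\sign\Phi_{\psi\psi}/4}$, and one has to show that the sign of $\Phi_{\psi\psi}$ at the critical point is exactly the relative sign of $\cJ$ and $\widetilde{\cJ}=\det(P,P_\psi)$. Once this is established, the index convention of Section~\ref{sec:02-05} --- namely $m_j=m(\a_j)$ when $\sign\cJ=\sign\widetilde\cJ$ and $m_j=m(\a_j)+\pi$ otherwise --- is seen to absorb exactly this phase and to match the Maslov index appearing in the standard construction via partial Fourier transform. Globally, asymptotic uniqueness (the transformation-of-amplitudes clause of Theorem~\ref{th0}) then extends the $O(h)$ equivalence from the nonsingular locus to the whole singular chart, giving the second assertion that the indices agree modulo $4$.
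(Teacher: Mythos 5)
Your computation at regular points is sound and does recover the right normalization: at a nondegenerate critical point of $\psi\mapsto\tau(x,\psi)$ one has $x=X$, $\Phi_{\psi\psi}=-\langle X_\psi,P_\psi\rangle=-\cJ\,\wt\cJ$, so that $\sqrt{\mu\,\abs{\det(P,P_\psi)}}\,\abs{\Phi_{\psi\psi}}^{-1/2}=\abs{\cJ}^{-1/2}$ and the stationary-phase factor $e^{i\pi\sign(\Phi_{\psi\psi})/4}$ is governed by the relative sign of $\cJ$ and $\wt\cJ$, exactly as the index convention of Sec.~\ref{sec:02-05} requires. This confirms consistency of \eqref{eq:02-10} with the WKB formula \eqref{eq:02-05} away from the focal points.

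But that is not what Theorem~\ref{th000} asserts, and the step you use to bridge the gap is where the proof actually lives. The theorem compares \eqref{eq:02-10} with the standard \emph{singular-chart} canonical operator in a neighborhood of a focal point, where \eqref{eq:02-05} is unavailable and where both representations are large (of order $h^{-1/6}$ near a fold); agreement modulo $O(h)$ on the dense set of regular points does not by itself propagate to the caustic, because the one-dimensional stationary-phase expansion you rely on degenerates exactly there. You close this gap by invoking Theorem~\ref{th0}, but the paper explicitly states that Theorem~\ref{th0} is \emph{not} used in this text (its proof is only referenced to the arXiv companion) and that Theorem~\ref{th000} is the special case to be proved in detail in the appendix --- so in the context of this paper your argument inverts the intended logical order and is effectively circular. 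The paper's proof instead works directly at the focal point: after a rotation making $P_2(\tau_0,\psi_0)=0$ it writes the standard operator in the singular chart with coordinates $(x_1,p_2)$ as \eqref{J-a}, applies the Fourier transform in $x_2$ to both \eqref{eq:02-10} and \eqref{J-a}, and evaluates the resulting integral by a \emph{two-dimensional} stationary phase in $(\psi,x_2)$ with phase $\tau(x,\psi)-p_2x_2$; the Hessian of this phase, $\det(-\Phi'')=-P_{2\psi}^2$, is nondegenerate precisely because $\det(P,P_\psi)\ne0$ on the singular chart, even at focal points. The identity $\mathcal{J}_s\det(P,P_\psi)=-\det(-\Phi'')$ together with the regularized count of the argument then yields both the amplitude match and the index rule simultaneously. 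To make your proof self-contained you must either reproduce some such direct comparison valid at the focal point, or actually supply the continuity-of-principal-symbol argument that would justify extending the regular-point identity of leading symbols across the caustic; as written, that extension is asserted rather than proved.
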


The proof of this theorem will be given in
Appendix~\ref{sec:02-06}.

\subsection{Quantization
conditions and the global definition of the canonical
operator}\label{sec:02-07}

To define the canonical operator globally, we need to choose the
local eikonals $\tau_j$ and numbers $m_j$ (or, equivalently, the
arguments of the Jacobians $\cJ$ and$\wt\cJ$) in the canonical
charts in such a way that the local canonical operators coincide
modulo $O(h)$ on the intersections of their canonical charts. This
is possible if the \textit{Bohr--Sommerfeld quantization
conditions}
\begin{equation}\label{quant-cond}
 \frac2{\pi h}\oint_{\ga_j}P(\a)\,dX(\a)\equiv\ind\ga_j  \pmod4,
 \qquad j=1,\dotsc,N,
\end{equation}
are satisfied for a basis $\ga_1,\dotsc,\ga_N$ of independent
cycles on $\La^2$. Here $\ind\ga_j$ is the Maslov index of the
cycle $\ga_j$ and $N$ is the Betty number of~$\La^2$. For the
manifold in Sec.~2, $N=1$ and $\ind\gamma_1=0$.

Let conditions \eqref{quant-cond} be satisfied, and let
$\{\mathbf{e}_j\}$ be a locally finite partition of unity on
$\La^2$ subordinate to the cover $\{U_j\}$. Let us define the
global canonical operator $K^h_{\La^2}$ acting on smooth
functions~$A(\alpha)$ on $\La^2$ by the formula
\begin{equation}\label{global-operator}
    u(x,h)= K^h_{\La^2} A\equiv\sum_{j}K_j(\mathbf{e}_jA),
\end{equation}
where the sum is taken over all charts~$U_j$ of the canonical atlas
on~$\La^2$.

\begin{theorem}\label{th030}
If the quantization conditions are satisfied, then the canonical
operator~$K^h_{\La^2}$ defined in~\eqref{global-operator} is modulo
$O(h)$ independent of the choice of the charts~$U_j$ and the
partition of unity $\mathbf{e}_j$.
\end{theorem}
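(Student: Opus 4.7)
The plan is to reduce Theorem~\ref{th030} to its classical analog for Maslov's standard canonical operator by way of Theorem~\ref{th000}. In every nonsingular chart $U_j$ the operator $K_j$ from \eqref{eq:02-05} coincides with the standard Maslov local canonical operator by construction; in every singular chart $U_j$ the operator $K_j$ from \eqref{eq:02-10} coincides modulo $O(h)$ with the standard Maslov local canonical operator, and the two Maslov indices agree modulo~$4$, by Theorem~\ref{th000}. Hence the global operator \eqref{global-operator} differs from the standard Maslov global canonical operator, assembled from the same atlas and the same partition of unity, by $O(h)$; the assertion of Theorem~\ref{th030} then follows from the known independence property of the standard Maslov construction~\cite{Mas1,MaFe1,MaNa2}.

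For a self-contained argument I would verify directly that on the intersection $U_j\cap U_k$ of two canonical charts, the local operators $K_j$ and $K_k$ agree modulo $O(h)$ when applied to a smooth function supported in the intersection. Three cases arise. \emph{Nonsingular-nonsingular}: the two expressions \eqref{eq:02-05} differ only by the constant factor $e^{\frac{i}{h}(\tau_j-\tau_k)}e^{-i\pi(m_j-m_k)/2}$ on the connected, simply connected intersection, and the quantization conditions~\eqref{quant-cond} guarantee that this factor equals~$1$ once the local eikonals and indices are chosen consistently around every cycle. \emph{Mixed} (say $U_j$ singular, $U_k$ nonsingular): apply stationary phase to the $\psi$-integral in \eqref{eq:02-10}; its stationary points are precisely the preimages of~$x$ in $U_j\cap U_k$, and Lemma~\ref{pr:02-02} together with the third identity in \eqref{eq:02-01} show that the leading-order contribution reproduces \eqref{eq:02-05} from~$U_k$, with the Maslov-index prescription of Section~\ref{sec:02-05} exactly compensating the Hessian signature. \emph{Singular-singular}: reduces to two mixed comparisons through an auxiliary nonsingular subchart, or can be handled directly by a change of the $\psi$-variable combined with stationary phase away from the focal points.

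Once the intersection compatibility is in hand, independence from the atlas and the partition of unity is a routine cocycle argument: for two admissible choices $(\{U_j\},\{\mathbf{e}_j\})$ and $(\{U_k'\},\{\mathbf{e}_k'\})$,
\begin{equation*}
 \sum_j K_j(\mathbf{e}_j A)-\sum_k K_k'(\mathbf{e}_k' A)
 =\sum_{j,k}\bl[K_j(\mathbf{e}_j\mathbf{e}_k'A)-K_k'(\mathbf{e}_j\mathbf{e}_k'A)\br],
\end{equation*}
because $\sum_j\mathbf{e}_j=\sum_k\mathbf{e}_k'=1$; each summand is supported in $U_j\cap U_k'$ and is $O(h)$ by intersection compatibility, while local finiteness of both covers makes the double sum locally finite.

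The main obstacle is the mixed nonsingular-singular intersection, where one must identify the Hessian signature of $\tau(x,\psi)$ in~$\psi$ at the stationary point with the sign discrepancy between $\cJ$ and $\det(P,P_\psi)$. This is exactly the role of the Maslov-index shift in Section~\ref{sec:02-05}, and verifying it demands a careful sign-chase through~\eqref{eq:02-01}--\eqref{eq:02-02}. The bulk of this verification is, however, already embedded in the proof of Theorem~\ref{th000} (Appendix~\ref{sec:02-06}), so once that theorem is available Theorem~\ref{th030} reduces to the bookkeeping described above.
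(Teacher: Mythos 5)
Your first paragraph is exactly the paper's proof: Theorem~\ref{th030} is deduced from Theorem~\ref{th000} together with the known independence property of the standard Maslov canonical operator, and nothing more is said there. The additional self-contained sketch (stationary phase on chart intersections plus the cocycle argument) is a reasonable outline of what the classical proof contains, but it is not needed and is not what the paper does.
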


\begin{proof}
This theorem follows from Theorem~\ref{th000} and from the fact
that the desired assertion holds for the ``standard'' canonical
operator.
\end{proof}

A practical consequence of this theorem is as follows. Assume one
wishes to construct the asymptotic solution in a neighborhood of
some given point $x$. Then one should find all points $\alpha_k(x)$
on $\Lambda$ such that $X(\alpha_k(x)=x,\quad k=1,\ldots,k_0$. If
all points $\alpha_k(x)\in \Lambda^2 $ are ``far'' from the focal
points, then the sum consists of the WKB type
solutions~\eqref{eq:02-05}. If one or several (or an infinite set
of) focal points $\alpha^*_n=(\tau^*_n,\psi^*_n)$ on $\Lambda$ are
close to or even coincide with some $\alpha_{k'(x)}$, then the sum
expressing the function $u(x)$ should include the integrals
\eqref{eq:02-10} with cut-off functions $\mathbf{e}_n(\psi)$ such
that their support cover the $\varepsilon$-neighborhood of these
focal points. This, in turn, often gives the opportunity to express
these integrals via the special functions. We discuss such
simplifications later in Sec.~\ref{sec:02-09}. \textit{Yet another
useful corollary is that in specific computations one need not
assume that the domains \rom(charts\rom) $U_j$ are simply
connected.} Then, strictly speaking, the $U_j$ are no longer
charts, but the formulas remain valid; moreover, some functions in
the partition of unity are aggregated, and the corresponding
formulas are simplified greatly.

\subsection{Relationship with differential equations.
Commutation of the canonical operator with  $h$-differential and
$h$-pseudodifferential operators} \label{sec:02-08}

To make the exposition self-contained, let us present a well-known
assertion providing the application of the canonical operator in
partial differential equations. Consider a differential or
pseudodifferential operator with a small parameter~$h$,
\begin{equation}\label{e524}
    \wh L=L(\2x,\1{\wh p},h)\equiv L\BL(\2x,-ih\1{\pd{}x},h\BR),
\end{equation}
given by its symbol $L(x,p,h)$ with the Taylor expansion
$L(x,p,h)=H(x,p)+hL_1(x,p)+h^2L_2(x,p)+\dotsm$. Recall that
$H(x,p)$ is called the \textit{principal symbol} of the operator
$\wh L$, or the \textit{classical Hamiltonian}, and the function
$\frac12(\tr H_{xp})-iL_1(x,p)$, where $\tr H_{xp}$ is the trace of
the matrix $H_{xp}(x,p)$, is called the \textit{subprincipal
symbol} of the operator $\wh L$.
\begin{theorem}\label{th51}
Let $A\in C_0^\infty(\La^2)$. Then
\begin{equation}\label{e525}
    \wh L (K^h_{\La^2} A)=K_\La^2(H|_{\La^2}A)+O(h),
\end{equation}
where $H|_{\La}$ is the restriction of $H(x,p)$ to~$\La^2$. If,
moreover, $H(x,p)\equiv0$ on $\La^2$ and the measure $d\mu$ is
invariant with respect to shifts along the trajectories of the
Hamiltonian vector field
\begin{equation}\label{e526}
    \od{}t=H_p(x,p)\pd{}x-H_x(x,p)\pd{}p,
\end{equation}
then
\begin{equation}\label{e527}
    \wh L(K^h_{\La^2} A)=-ihK^h_{\La^2}\BL(\od A t-
    \frac12(\tr H_{xp})|_{\La^2}A+iL_1|_{\La^2}A\BR)+O(h^2).
\end{equation}
\end{theorem}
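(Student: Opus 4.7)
The plan is to use the partition of unity in~\eqref{global-operator} together with the linearity of $\wh L$ to reduce the claim to a chart-by-chart computation: on each canonical chart $U_j$ I would analyze $\wh L\,K_j(\mathbf{e}_j A)$ separately. In a nonsingular chart, $K_j(\mathbf{e}_j A)$ has the WKB form $b(x)\,e^{i\tau(x)/h}$ with $b(x)=e^{-i\pi m_j/2}(\mathbf{e}_j A)(\a(x))/\sqrt{\abs{\cJ(\a(x))}}$, and the standard asymptotic expansion of an $h$-pseudodifferential operator acting on a WKB function yields
\begin{equation*}
\wh L(b\,e^{i\tau/h})=e^{i\tau/h}\Bl[L(x,\tau_x,h)\,b-ih\bl(\langle H_p,b_x\rangle+\tfrac12\tr(H_{pp}\tau_{xx})\,b+L_1 b\br)\Br]+O(h^2)\,e^{i\tau/h},
\end{equation*}
with $H_p,L_1$ evaluated at $(x,\tau_x)=(x,P(\a(x)))\in\La^2$. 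In particular $L(x,\tau_x,0)=H|_{\La^2}$, so the leading term reproduces $K_j(H|_{\La^2}\,\mathbf{e}_j A)$, which gives~\eqref{e525} on this chart.

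In a singular chart, $K_j(\mathbf{e}_j A)$ is the oscillatory integral~\eqref{eq:02-10}. Differentiating under the integral and applying the expansion of $\wh L$ on $c(x,\psi)\,e^{i\tau(x,\psi)/h}$ (with $\psi$ treated as a parameter) produces a leading integrand $L(x,\tau_x,h)\,c\,e^{i\tau/h}$. By the second lemma in Section~\ref{sec:02-05}, on the critical surface $\Pi=\{\tau_\psi=0\}$ the point $(x,\tau_x(x,\psi))$ lies on $\La^2$, hence $L(x,\tau_x,0)=H|_{\La^2}$ there; off~$\Pi$ the relation $\tau_\psi\ne0$ together with integration by parts shows that any amplitude vanishing on~$\Pi$ contributes only $O(h)$. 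Writing $L(x,\tau_x,h)\,c=H|_{\La^2}(\a(x,\psi))\,c+\bl(L(x,\tau_x,h)-H|_{\La^2}(\a(x,\psi))\br)c$ and applying this observation to the second summand yields~\eqref{e525} in the singular case.

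For~\eqref{e527}, the cleanest route is to invoke Theorem~\ref{th000}: our canonical operator differs from Maslov's standard one by $O(h)$, and the transport-equation identity is a classical theorem for the standard operator~\cite{Mas1,MaFe1,Mas6}; applying $\wh L$, which maps $O(h)$ compactly supported amplitudes to $O(h)$ functions by uniform $h$-boundedness of $h$-pseudodifferential operators, transfers the identity to our operator. I expect the main obstacle to be a \emph{direct} verification of~\eqref{e527} in a singular chart by extending the stationary-phase expansion to second order: one must combine the $\tfrac12\tr(H_{pp}\tau_{xx})$-type terms with the $\psi$-integration of $c=\sqrt{\mu\abs{\det(P,P_\psi)}}\,(\mathbf{e}_j A)$ and identify the Lie derivative $dA/dt$ using the measure-invariance hypothesis together with the identities in Lemma~\ref{pr:02-01} to convert $(\tau,\psi)$-derivatives into derivatives along the Hamiltonian flow. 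Appealing to Theorem~\ref{th000} side-steps this bookkeeping entirely.
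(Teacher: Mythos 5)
Your overall strategy for \eqref{e527} is exactly the paper's: the paper's entire proof is the one-sentence reduction ``this is a consequence of Theorem~\ref{th000} and the theorem on the commutation of the standard canonical operator with differential operators.'' Your chart-by-chart verification of \eqref{e525} (WKB expansion in nonsingular charts, differentiation under the integral plus integration by parts off the critical surface $\Pi$ in singular charts) is more detailed than anything the paper writes down, and it is sound; it essentially re-derives the leading-order commutation formula rather than quoting it.

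There is, however, one step whose justification as you state it is too weak for the second assertion. You transfer \eqref{e527} from the standard operator to the new one by saying that $\wh L$ ``maps $O(h)$ compactly supported amplitudes to $O(h)$ functions by uniform $h$-boundedness.'' That argument only controls $\wh L$ applied to the discrepancy $r_h=K^h_{\La^2}A-K^{h,\mathrm{std}}_{\La^2}A$ at order $O(h)$, which would swamp the right-hand side of \eqref{e527}: the claim there is accurate to $O(h^2)$, and an uncontrolled $O(h)$ term in $\wh L r_h$ destroys it. The repair is to use the stronger content behind Theorem~\ref{th000} (and Theorem~\ref{th0}): the two oscillatory representations differ not merely by something small in norm but by a Lagrangian distribution associated with the same $\La^2$ whose amplitude is $O(h)$, i.e.\ $r_h=hK^{h,\mathrm{std}}_{\La^2}A_1+O(h^2)$ for a smooth $A_1$. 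Then \eqref{e525} applied to $A_1$ together with the hypothesis $H|_{\La^2}\equiv0$ gives $\wh L r_h=hK^{h,\mathrm{std}}_{\La^2}(H|_{\La^2}A_1)+O(h^2)=O(h^2)$, and the transfer goes through. With that amendment your argument is complete and coincides in substance with the paper's.
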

\begin{proof}
This is a consequence of Theorem~\ref{th000} and the theorem on the
commutation of the ``standard'' canonical operator with
differential operators.
\end{proof}

In particular, this theorem gives a recipe for constructing
Lagrangian manifolds, including manifolds with eikonal coordinates
(see the examples below).

\subsection{Simplification of solutions near the caustics}
\label{sec:02-09}

A straightforward application of formula~\eqref{eq:02-10}
specifying the canonical operator~$K^h_{\La^2}$ in the singular
chart~$U_j$ with eikonal coordinates $(\tau,\psi)$ requires
computing an integral of a rapidly oscillating function. It is
natural to ask whether one can avoid this labor-consuming
computation. We already know that if the support of the amplitude
$A_j(\tau,\psi)$ in~\eqref{eq:02-10} does not meet the set
$\Ga\subset\La^2$ of focal points, then formula~\eqref{eq:02-10}
can be reduced to the form~\eqref{eq:02-05}, which does not contain
integration. Hence, by using a partition of unity, one can readily
verify that it suffices to study the problem for the case in which
the support $\supp A_j$ is contained in a small neighborhood of
some focal point $\a^*=\a(\tau^*,\psi^*)\in\Ga$. Then the
function~\eqref{eq:02-10} is modulo $O(h^\infty)$ concentrated in a
neighborhood of the \textit{caustic} (which is the
projection~$\pi_x(\Ga)$ of the set $\Ga\in\La^2$ of focal points
of~$\La^2$ onto~$\RR_x^2$), or, more precisely, in a neighborhood
of the projection $x^*=X(\tau^*,\psi^*)$ of the point~$\a^*$. We
wish to simplify the integral formula~\eqref{eq:02-10}
for~$K^h_{\La^2}$ in a neighborhood of the caustic by expressing
the asymptotics of the function $K_jA_j$ via known special
functions.

The asymptotic expansion of the integral~\eqref{eq:02-10} is
related to the stationary (critical) points~$\psi_{cr}(x)$ of the
phase function $\tau(x,\psi)$ of this integral. These points depend
on the variables $x=(x_1,x_2)$ and prove to be degenerate if
$x$~lies on the caustic. It follows from the catastrophe theory and
the stationary phase method \cite{MaFe1, Arn2} that (except for
``superdegenerate'' cases) only a small neighborhood of the point
$\psi^*=\psi_{cr}(x^*)$ contributes to the asymptotic expansion of
the integral~\eqref{eq:02-10} for~$x$ close to~$x^*$, and the
contribution is related to the \textit{normal form} of the phase
function~$\tau$. This form is determined by the first nonzero
coefficient in the Taylor series expansion of $\tau-\tau^*$ in
powers of $\psi-\psi^*$. To obtain an asymptotics uniform in
$(x_1,x_2)$ near the caustic in some neighborhood independent
of~$h$, one needs to reduce the phase function to a normal form in
this neighborhood. This procedure is based on the Malgrange
preparation theorem and hence is not constructive.

More constructively, one can replace the phase function~$\tau$ by a
finite segment of its Taylor series and obtain the asymptotics of
the integral in an $O(h^\delta)$-neighborhood of the caustic for
some~$\delta$; for the points~$x$ near the boundary of this
neighborhood, both this asymptotics and the WKB
representation~\eqref{eq:02-05} hold. Moreover, one can replace the
nonlinear dependence of~$\psi_{cr}$ and~$\psi$ on the variables~$x$
by a linear approximation in $x-X(\tau^*,\psi^*)$ and set
$x=X(\tau^*,\psi^*)$ in the amplitude~$A$. This provides a complete
description of the asymptotic solution of the original problem.
\textit{Note that we do not paste together various asymptotics of
solutions as in the method of matched asymptotic expansions. We
only simplify Maslov's canonical operator in various domains.} In
general position, the sufficient Taylor polynomial is of degree~$3$
for edges of the caustics (this results in the Airy function)
and~$4$ for the cusp of the caustic (this results in the Pearcy
function). There is a vast literature on the Airy and Pearcy
functions and their applications to ray expansions and the
semiclassical approximation; here we only note
\cite{Berry,Pearcy,KravtsovOrlov}.

Despite being simple in principle, the construction itself, as well
as its justification, involves quite a few technical details, and
we give these in Appendix~2.

\section{EXAMPLES}\label{sec:03}

Let us discuss typical situations in which condition~\ref{co:02-1}
for the existence of eikonal coordinates is satisfied.

\begin{example}\label{ex1}
Let the Hamiltonian $H(x,p)$ be a homogeneous function of degree
$1$ with respect to the variables $p$ (i.e., $H(x,\la p)=\la
H(x,p)$ for $\la>0$), and suppose that a Lagrangian manifold
$\La^2$ lies in the level set $\{(x,p)\colon H(x,p)=1\}$. Then
condition \ref{co:02-1} is satisfied. Indeed, since
$H|_{\La^2}=\const$, it follows that $\La^2$ is invariant with
respect to the Hamiltonian vector field $V(H)=H_p\pa_x-H_x\pa_p$,
and $P\,dX(V(H)|_{\La^2})=PH_p(X,P)=H_{\La^2}=1$ by the Euler
identity, so that the form $P\,dX$ is necessarily nonzero. We also
see that the Hamiltonian vector field does not vanish on $\La^2$.
Hence we can introduce a local coordinate system $(\tau,\psi)$ on
$\La^2$ such that $\psi$ is constant along the trajectories and
$\tau$ is the time along the trajectories (the so-called
\textit{proper time}); these coordinates are eikonal coordinates.

This gives the following (well-known) method for constructing
Lagrangian manifolds with eikonal coordinates. Namely, let
$\La^1_0=\{(x,p)\in\mathbb{R}^4_{p,x}\colon x=X^0(\psi),
p=P^0(\psi)\}$ be a smooth open ($\psi\in \mathbb{R}$) or closed
($\psi\in \mathbb{S}$) curve in the four-dimensional phase space
such that $H|_{\La^1_0}=1$ and \textbf{(i)} each trajectory
$(P(\tau,\psi),X(\tau,\psi))$ of the Hamiltonian system
\begin{equation}\label{HamSys}
 \dot p=-H_x,\qquad \dot x=H_p
\end{equation}
issuing from $\La_0^1$ is transversal to it and \textbf{(ii)} (see
\cite{Vainberg,Kucherenko}) all these trajectories leave each
bounded domain in $\mathbb{R}^4_{x,p}$ in finite time. Then the
union of these trajectories is the Lagrangian manifold
$\La^2=\{(x,p)\in\mathbb{R}^4_{x,p}\colon x=X(\tau,\psi),
X(\tau,\psi)\}$ with eikonal coordinates $(\tau,\psi)$. The fact
that $\La$ is a smooth manifold follows from assumptions
\textbf{(i)} and \textbf{(ii)} (see \cite{Vainberg,Kucherenko}).
The Lagrangian property follows from the conservation of the skew
inner product of solutions of linear Hamiltonian systems.
\end{example}

\begin{example}[Maupertuis--Jacobi
principle and canonical operator]\label{ex2}

Consider the Ha\-mil\-tonian
\begin{equation*}
  \cH(x,p)=\frac{p^2}{2\mathbf{m}}+v(x),
\end{equation*}
arising in particular when constructing semiclassical asymptotics
for the Schr\"odinger equation. Here $\mathbf{m}$ is the mass, and
the potential $v(x)$ is assumed to be smooth and bounded. Suppose
that a Lagrangian manifold $\La$ lies in the level set
$\{(x,p)\colon \cH(x,p)=E\}$ for some $E>\max_x v(x)$ (unbound
states). This property can be rewritten in the form
\begin{equation*}
  \frac{p^2}{2\mathbf{m}}=E-v(x)\quad\Longleftrightarrow\quad \frac{p^2}{2\mathbf{m}(E-v(x))}=1
  \quad\Longleftrightarrow\quad\frac{|p|}{\sqrt{2\mathbf{m}(E-v(x))}}=1.
\end{equation*}
By the Maupertuis--Jacobi principle (see~\cite{Arn1}, and
also~\cite{DobRou10}), the trajectories  of the field~$V(\cH)$
on~$\La$ coincide (up to a change of time) with those of the
field~$V(H)$ corresponding to the Hamiltonian homogeneous of first
order with respect to~$p$ (defining the so-called Finsler metric):
\begin{equation*}
  H(x,p)={\abs{p}}C(x),\quad C(x) =1/{\sqrt{2\mathbf{m}(E-v(x))}}.
\end{equation*}
Namely, let  $(P(\tau,\psi),X(\tau,\psi))$ be the solutions of the
system
\begin{equation}\label{HamSys1}
\dot p=-|p|C_x,\qquad \dot x=\frac{p}{|p|}C
\end{equation}
issuing from the corresponding curve $\Lambda^1_0$. Then, according
to Maupertuis--Jacobi principle, we can introduce a new time
$t=t(\tau,\psi)$ by setting
\begin{equation}\label{time}
    t=\int_0^\tau\sqrt{2\mathbf{m}(E-V(X(\tau,\psi)))}\,d\tau=
    \int_0^\tau|P(\tau,\psi)|\,d\tau.
\end{equation}
By solving this equation for $\tau$, we obtain a function
$\tau(t,\psi)$. Now, by substituting it into the functions
$P(\tau,\psi)$ and $X(\tau,\psi)$, we obtain the functions
$(\mathcal{P}(t,\psi),\mathcal{X}(t,\psi)=
(P(\tau(t,\psi),\psi),X(\tau(t,\psi),\psi))$, which are the
solutions of the Hamiltonian system with Hamiltonian $\mathcal{H}$.
On the manifold $\La^2$, one has $H(x,p)=1$, and we are in the
framework of Example~\ref{ex1}. Thus, on $\La^2$ we have two
coordinate systems (two distinct parametrizations) $(t,\psi)$ and
$(\tau,\psi)$ related by the formulas $\tau=\tau(t,\psi)$,
$\psi=\psi$, and the Jacobian of passage from $(t,\psi)$ to
$(\tau,\psi)$ is equal to $1/|P(\tau,\psi)|$.

Let us explain the passage from the Hamiltonian system with the
$\cH(x,p)$ to the system with the Hamiltonian $H(x,p)$ and from the
coordinates $(t,\psi)$ to $(\tau,\psi)$ at a ``quantum level'' in
the following way. Consider the stationary Schr\"odinger equation
\begin{equation}\label{Schr}
  -\frac{h^2}{2\mathbf{m}}\triangle u+vx)u=E u,
\end{equation}
under the assumption that the potential $v(x)$ is a smooth function
such that $E>V$ and $h$ is a small positive parameter.  It is
useful to recall that, by introducing the function
$n^2=2\mathbf{m}(E-v(x))$ and large parameter $k=1/h$, we can
rewrite \eqref{Schr} in a form of the Helmholtz equation
\begin{equation}\label{eq:03-1}
(\Delta+k^2n^2(x))u(x,k)=0, \qquad x\in\mathbb{R}^2,
\end{equation}
where the refraction coefficient $n(x)$ is a smooth everywhere
positive function.
\begin{remark}
It is often assumed in physical applications that $v(x)\to 0$ as
$|x|\to \infty$. In this case, we can assume that $n^2(x)\to 1$ as
$|x|\to \infty$. To this end, in the preceding formulas one should
replace the parameter~$h$ by the parameter $h'=h/\sqrt{E}$ and the
potential~$v$ by the potential~$v'=v/E$.
\end{remark}

Set $C(x)=1/\sqrt{2\mathbf{m}(E-v)}\equiv 1/n(x)$. By dividing
\eqref{eq:03-1} by $2\mathbf{m}(E-v)$, we obtain
\begin{equation}\label{eq:03-1a}
  (-{h^2}{C^2(x)}\Delta-1)u(x,h)=0, \qquad x\in\RR^2,
\end{equation}
The operator $-{h^2}{C^2(x)}\Delta
=\stackrel{2}{C^2(x)}\stackrel{1}{\hat p^2}$, $\hat
p=-ih\frac{\pa}{\pa x}$, is essentially self-adjoint in the
weighted $L_2$ space with weight $1/C^2(x)$ and is positive in this
space. Hence there exists a self-adjoint positive operator $\hat
L=\sqrt{-{h^2}{C^2(x)}\Delta}=\sqrt{\stackrel{2}{C^2(x)}\stackrel{1}{\hat
p^2}}$ and one can rewrite \eqref{eq:03-1a} in the form $(\hat
L+1)(\hat L-1)u=0$. The first factor on the left-hand side is a
positive operator, and hence one can divide this equation by it and
rewrite it in the form $(\hat L-1)u=0$. Now let us represent the
operator $\hat L=\sqrt{\stackrel{2}{C^2(x)}\stackrel{1}{\hat p^2}}$
in the form of a pseudodifferential operator
$L(\stackrel{2}x,\stackrel{1}{\hat p},h)$ with symbol $L(x,p,h)$
having the asymptotic expansion $L(x,p,h)=H(x,p)+h L_1(x,p)+h^2
L_2(x,p)+\dotsm$. One cannot find the exact symbol $L(x,p,h)$, but
it is easy to find the coefficients $H$ and $L_j(x,p)$. For
constructing the leading tern of the semiclassical asymptotics,
only $H$ and $L_1$ are important. Let us find them. We have
$(L(\stackrel{2}x,\stackrel{1}{\hat
p},h))^2=\stackrel{2}{C^2(x)}\stackrel{1}{\hat p^2}$. Using the
formulas in \cite{Mas6,MaNa2} and passing in this equation from
operators to their symbols, we find the equation for $L$:
\begin{equation*}
  L(\stackrel{2}{x},p-ih \stackrel{1}{\frac{\pa}{\pa x}},h)L(x,p,h)
 =C^2(x)p^2.
\end{equation*}
Using the Taylor expansion with respect to $ih
\stackrel{1}{\frac{\pa}{\pa x}}$ of the first factor, we readily
obtain $H(x,p)\equiv L_0(x,p)=C(x)|p|$ and
$L_1(x,p)=\frac{i}{2H}\frac{\pa H}{\pa p}\frac{\pa H}{\pa x}\equiv
i\frac{\langle C_x,p\rangle}{2 |p|}\equiv
\frac{i}{2}\rm{tr}H_{px}$.

Assume that we have constructed a Lagrangian manifold as in
Example~\ref{ex1} invariant with respect the corresponding vector
field $V(H)$. Here the Hamiltonian is $H=|p|C(x)$, and the
Hamiltonian system has the form~\eqref{HamSys1}.

Let $A(\tau,\psi)$ be an amplitude on $\La$. Consider the function
$u=K^h_{\La^2} A$.  Let us find out under what conditions this
function is a solution of Eq.~\eqref{Schr} modulo $O(h^2)$.
According to the construction of~$\La^2$, the derivative
$\frac{d}{d t}$ coincides with the derivative $\frac{d}{d \tau}$ on
$\La$, and hence, by applying Theorem~\ref{th51}, we obtain the
equation $\frac{d A}{d\tau}-\rm{tr}H_{px}|_{\La^2}A=0$ for the
amplitude. By the Hamiltonian system, we have $\dot P=-|P|C_x(X)$,
which gives $\rm{tr}H_{px}|_{\La}=\frac{\langle C_x,p\rangle}{
|p|}|_{\La^2}=-\frac{\langle\dot P,P\rangle}{|P|}=\frac{|\dot
P|}{|P|}=-\frac{d \log |P|}{d \tau}$. Thus, the solution $A$ can be
presented in the form
\begin{equation}\label{Sol1}
    A=\frac{A_0(\psi)}{|P(\tau,\psi)|}
    =\frac{A_0}{\sqrt{2\mathbf{m}(E-v(X(\tau,\psi)))}}
\end{equation}
(because $\frac{|P|}{\sqrt{2\mathbf{m}(E-v(X(\tau,\psi)))}}\equiv
H=1$), where $A_0$~is an arbitrary smooth compactly
supported\footnote{The last property is not necessary; it only
guarantee that the function $u$ is well defined in an appropriate
function space.} function.

Now we can state the general result.

\begin{lemma}
Let $\La^2$ be an invariant Lagrangian manifold with eikonal
coordinates constructed from some smooth curve $\La^1_0$, and let
$A_0(\psi)$ be an arbitrary smooth function. Then the function
\begin{equation}\label{Schr1}
 u=K_{\La^2}\BL[\frac{A_0(\psi)}{|P(\tau,\psi)|}\BR]=
 \frac{1}{\sqrt{2\mathbf{m}(E-v(X(\tau,\psi)))}}K^h_{\La^2}[A_0(\psi)]+O(h)
\end{equation}
satisfies the Schr\"odinger equation \eqref{Schr} modulo $O(h^2)$.
\end{lemma}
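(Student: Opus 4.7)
The plan is to exploit the reduction, worked out immediately before the lemma, of the Schr\"odinger equation~\eqref{Schr} to the first-order pseudodifferential equation $(\hat L - 1)u = 0$, where $\hat L$ has principal symbol $H(x,p) = C(x)|p|$ and the coefficient $L_1(x,p) = \tfrac{i}{2}\tr H_{px}$ already computed in the text. Since $\La^2\subset\{H = 1\}$ by the Maupertuis--Jacobi construction, the principal symbol $H - 1$ of $\hat L - 1$ vanishes identically on $\La^2$, placing us in the setting of formula~\eqref{e527} of Theorem~\ref{th51}. I would equip $\La^2$ with the measure $d\mu = d\tau\wedge d\psi$: since $\tau$ is the proper time of the flow of $V(H)$ and $\psi$ is constant along trajectories, this measure is invariant under that flow, which is the remaining hypothesis of~\eqref{e527}.

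Setting $u = K^h_{\La^2}A$ and applying~\eqref{e527} to $\hat L - 1$ gives, modulo $O(h^2)$, the transport equation
\begin{equation*}
\frac{dA}{d\tau} - \tfrac12(\tr H_{xp})|_{\La^2}A + iL_1|_{\La^2}A = 0.
\end{equation*}
Substituting $iL_1 = -\tfrac12\tr H_{px}$ and using symmetry of the trace collapses this to $dA/d\tau = (\tr H_{px})|_{\La^2}A$. I would then compute the right-hand side on $\La^2$ exactly as in the text: $\tr H_{px} = \langle C_x,p\rangle/|p|$ directly from $H = |p|C(x)$, and Hamilton's equation $\dot P = -|P|C_x(X)$ gives $(\tr H_{px})|_{\La^2} = -\langle\dot P,P\rangle/|P|^2 = -d\log|P|/d\tau$. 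The transport equation thus becomes $(d/d\tau)\log(|P|A) = 0$, hence $|P(\tau,\psi)|A(\tau,\psi) = A_0(\psi)$ is constant along trajectories, which identifies $A = A_0(\psi)/|P(\tau,\psi)|$ and yields the first form of~\eqref{Schr1}.

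For the second equality in~\eqref{Schr1}, I would use the identity $|P(\tau,\psi)| = \sqrt{2\mathbf{m}(E - v(X(\tau,\psi)))}$, an immediate consequence of $|P|C(X) = 1$ together with the definition of $C$, combined with the standard property that $K^h_{\La^2}[f(X(\tau,\psi))\,g(\tau,\psi)] = f(x)K^h_{\La^2}[g] + O(h)$ for any smooth function $f(x)$ of the configuration variable alone (itself a special case of Theorem~\ref{th51} applied to multiplication by $f$, whose principal symbol is $f(x)$ and whose Hamiltonian vector field vanishes). Taking $f(x) = 1/\sqrt{2\mathbf{m}(E - v(x))}$, which is smooth because $E > \max_x v$, and $g = A_0(\psi)$, produces the second form. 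The main bookkeeping obstacle is the careful matching of signs and trace symmetries in the transport equation, while the conceptual content is entirely absorbed by Theorem~\ref{th51} and the Maupertuis--Jacobi rescaling already established in Example~\ref{ex2}.
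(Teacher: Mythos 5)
Your argument is correct and is essentially the paper's own: the text immediately preceding the lemma derives the amplitude $A=A_0(\psi)/|P(\tau,\psi)|$ in exactly this way, by reducing \eqref{Schr} to $(\hat L-1)u=0$, applying the commutation formula \eqref{e527} with $H=C(x)|p|$ and $iL_1=-\tfrac12\tr H_{px}$, and integrating the resulting transport equation via $\tr H_{px}|_{\La^2}=-d\log|P|/d\tau$. Your additional remarks --- the explicit check that $d\tau\wedge d\psi$ is flow-invariant and the use of \eqref{e525} to justify pulling $1/\sqrt{2\mathbf{m}(E-v)}$ outside the canonical operator modulo $O(h)$ --- only make explicit steps the paper leaves implicit (and you correct a harmless typo, $|P|^2$ versus $|P|$, in the paper's intermediate computation).
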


The last formula and formula~\eqref{time} give a different
representation of Maslov's canonical  operator based on the
coordinates $(t,\psi)$ and hence a representation of the asymptotic
solution of Eq.~\eqref{Schr} different from the one used, in
particular, in~\cite{Kucherenko,Vainberg}.

The suggested method for constructing Lagrangian manifolds permits
one to obtain a broad class of asymptotic solutions of the
Schr\"odinger equation. The problem is to find appropriate curves
$\La^1_0$ and functions $A(\psi_0)$ giving asymptotic solutions
that are of interest from the viewpoint of applications. Note also
that problems with an axisymmetric potential can be reduced to
one-dimensional problems if one passes to polar coordinates.
However, this passage results in a singularity at the origin and
requires additional studies when constructing the corresponding
asymptotics. In our scheme, there are no ``coordinate''
singularities, and the two-dimensional nature of the problem can
readily be taken into account when specifying the corresponding
Lagrangian manifold, which, in our opinion, makes the suggested
scheme very efficient in applications; see the examples below.
\end{example}

\begin{example}[two Lagrangian manifolds
important in physical applications]\label{ex3a}

Let us \newline present two (well-known) special Lagrangian
manifolds with eikonal coordinates important in physical
applications. Let the potential $v(x)$ or velocity $C$ satisfy
assumptions \textbf{(i)} and \textbf{(ii)}. The first Lagrangian
manifold is related to the circle
\begin{equation}\label{circle}
    \La^1_0=\{(x,p)\in\mathbb{R}^4_{p,x}\colon x=0, p=n(0)\bn(\psi),
    \;\psi\in \mathbb{S}^1\},
    \qquad\bn(\psi)=\begin{pmatrix}\cos \psi\\
    \sin \psi\end{pmatrix}.
\end{equation}
Then $\La^2 =\{(p,x)\in \mathbb{R}^4_{p,x}\colon
p=(P(\tau,\psi),X(\tau,\psi))$, $\tau\in \mathbb{R}$, $\psi\in
\mathbb{S}^1\}$, is a smooth manifold diffeomorphic to the
two-dimensional cylinder in~$\mathbb{R}^4_{px}$. Maslov's canonical
operator on such manifolds gives the generalized asymptotic
eigenfunctions of the Schr\"odinger operator \eqref{Schr}, and the
canonical operator on the ``half-cylinder'' (with $\tau\geq 0)$
defines the asymptotics of the Green function of  the Schr\"odinger
operator \eqref{Schr} or the corresponding Helmholtz operator (see
\cite{Kucherenko}).

Now assume for simplicity that the potential $v$ is a compactly
supported function and its support $\supp V$ lies in some domain
$D\in \mathbb{R}^2_{x}$ in the half-plane $x_1<a$. Now take
$\La^1_0$ to be the straight line $\La^1_0=\{p_1=\sqrt{E},p_2=0,
x_1=a, x_2=\psi\} $ and set $A_0=1$. Then the function~$u$
\eqref{Schr1} gives the leading term of the asymptotic solution~$w$
of the scattering problem for Eq.~\eqref{Schr}. A precise statement
and proof can be found in \cite[Chap.~XI]{Vainberg}.
Formula~\eqref{Schr1} gives a different and, together
with~\eqref{eq:02-05}, \eqref{COMAiry}, and~\eqref{COMPearcy}, more
explicit representation of the corresponding asymptotics.
\end{example}

Now consider  more specific examples.

\begin{example}\label{ex3}
(Asymptotics of a solution of the Helmholtz equation with
axisymmetric refraction coefficient and of generalized
eigenfunctions of the Schr\"odinger equations with axisymmetric
potential). Consider the Helmholtz equation \eqref{eq:03-1} in
which the refraction coefficient $n(x)$ is a smooth everywhere
positive function depending only on $\abs{x}$, $n(x)\equiv
n(\abs{x})$, and equal to $1$ for $\abs{x}>R_0$. For this equation,
consider the problem of finding rapidly oscillating solutions whose
associated Lagrangian manifold $\La^2$ coincides with the manifold
\eqref{eq:01-1} for large $\abs{x}$. (In particular, if $A=1$ and
the measure $d\mu$ on $\La^2$ is chosen to be invariant with
respect to the Hamiltonian vector field, as is actually the case in
the example with $n^2(x)\equiv1$, considered in Sec.~\ref{sec:21},
then our solution will be in a sense nearly proportional to the
Bessel function $\mathbf{J}_0\bl(\frac{\abs{x}}{h}\br)$ for large
$\abs{x}$.)

For $\La^2$ we take the Lagrangian manifold passing through the
circle \eqref{circle} and invariant with respect to the Hamiltonian
vector field corresponding to the Hamiltonian
$\cH(x,p)=p^2-n^2(x)$. As was said above, to find $\La^2$ and
determine eikonal coordinates on $\La^2$, we use the
Jacobi--Maupertuis principle and proceed to the equivalent
first-order homogeneous Hamiltonian
$H(x,p)=\frac{\abs{p}}{n(\abs{x})}=\abs{p}C(\abs{x})$. The solution
of the corresponding Hamiltonian system can be sought in the form
$X(\tau,\psi)=\rho(\tau)\bn(\psi)$,
$P(\tau,\psi)=\cP_\rho(\tau)\bn(\psi)$. By solving the equations,
we find that
\begin{equation*}
    \rho(\tau)\quad\text{is the inverse function of }
    T(r)=\int_0^rn(r)\,dr,\qquad
    \cP_\rho(\tau)=n(\rho(\tau)).
\end{equation*}
The pair $(\tau,\psi)$ is an eikonal coordinate system on $\La^2$.
We see that for large $\abs{\tau}$ this manifold coincides with the
one considered in Sec.~\ref{sec:21}, but the parametrization is
different. (The two parametrizations differ for large $\tau$ by the
shift
\begin{equation}\label{phase-shift}
    \delta\tau=\int_0^{R_0}(n(r)-1)\,dr
\end{equation}
in the variable $\tau$.)

We claim that, in these eikonal coordinates, the entire manifold
$\La^2$ is covered by one singular canonical chart. Indeed,
Eq.~\eqref{eq:02-08} has the global solution
$\tau(x,\psi)=T(\langle x,\bn(\psi)\rangle)$. The set
$\Pi=\{(x,\psi)\colon\tau_\psi(x,\psi)=0\}$ has the form
$\Pi=\{(x,\psi)\colon x\perp \bn'(\psi)\}= \{(x,\psi)\colon
x\parallel \bn(\psi)\}$, and $d(\tau_\psi)\ne0$ on $\Pi$, because
$\pd{}x(\tau\psi)=n(\abs{x})\bn'(\psi)\ne0$ on $\Pi$. The mapping
$(x,\psi)\lra (x,\tau_x(x,\psi))$ acts on $\Pi$ by the formula
$\Pi\ni (x,\psi)\longmapsto (x,n(\abs{x})\bn(\psi))$
and is easily seen to be a diffeomorphism of $\Pi$ onto $\La^2$.
Finally,
\begin{equation*}
    \det(P,P_\psi)=\cP_\rho^2(\tau)\det(\bn,\bn_\psi)
    =\cP_\rho^2(\tau)=n^2(\rho(\tau))\ne0.
\end{equation*}
Thus, we see that the manifold $\La^2$ is indeed covered with one
singular canonical chart.

The measure invariant with respect to the Hamiltonian vector field
associated with the Hamiltonian $\cH(x,p)$ has the form
$d\mu=\frac1{n^2(\rho(\tau))}\,d\tau\wedge d\psi$. Thus,
\begin{equation}\label{otvet}
    [K_\La^2 A](x,h)=\BL(\frac{i}{2\pi h}\BR)^{1/2}
    \int_0^{2\pi}e^{\frac{i}{h}T(\langle x,\bn(\psi)\rangle)}
    A(\langle x,\bn(\psi)\rangle,\psi)\,d\psi.
\end{equation}
Let us show that this integral can be expressed via the Bessel
function. A straightforward computation proves the following
\begin{lemma}
There exists a smooth change of variables $y=g(x)$,
$\wt\psi=\wt\psi(x,\psi)$ such that $T(\langle
x,\bn(\psi)\rangle=\langle y,\bn(\wt\psi)\rangle$. Moreover,
\begin{equation}\label{zamena-1}
    g(x)=\frac{T(\abs{x})}{\abs{x}},\qquad
    \pd{\wt\psi}{\psi}\bigg|_\Pi=
    \sqrt{\frac{\abs{x}n(\abs{x})}{T(\abs{x})}}.
\end{equation}
\end{lemma}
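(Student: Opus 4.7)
The plan is to reduce the sought identity to a purely scalar relation via polar coordinates on both sides. Writing $x=|x|\bn(\varphi)$, one has $\langle x,\bn(\psi)\rangle=|x|\cos(\psi-\varphi)$. The form of $g$ in the statement suggests taking $y=g(x)=(T(|x|)/|x|)x$, so that $|y|=T(|x|)$ and $y$ has the same polar angle $\varphi$ as $x$. The desired identity $T(\langle x,\bn(\psi)\rangle)=\langle y,\bn(\wt\psi)\rangle$ then becomes
\begin{equation*}
  T(|x|\cos(\psi-\varphi))=T(|x|)\cos(\wt\psi-\varphi),
\end{equation*}
which is to be read as the implicit definition of $\wt\psi$: with $s:=\psi-\varphi$, set
\begin{equation*}
  \cos(\wt\psi-\varphi)=T(|x|\cos s)/T(|x|),
\end{equation*}
and select the branch with $\wt\psi=\varphi$ at $s=0$ (the companion branch near $s=\pi$ is treated symmetrically).

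The first key step is to check that this formula actually defines a smooth $\wt\psi(x,\psi)$ near $\Pi=\{s\equiv 0\}$. Since $T(0)=0$ and $T'(r)=n(r)>0$, $T$ is an odd smooth strictly increasing function of its real argument, so the right-hand side lies in $[-1,1]$ with equality only at $s\in\pi\mathbb{Z}$. Taylor expansion in $s$ gives
\begin{equation*}
 T(|x|\cos s)/T(|x|)=1-\tfrac12\,\frac{|x|n(|x|)}{T(|x|)}\,s^2+O(s^4),
\end{equation*}
an even smooth function of $s$. Comparing with $1-\tfrac12(\wt\psi-\varphi)^2+O((\wt\psi-\varphi)^4)$, the relation takes the form $(\wt\psi-\varphi)^2=F(|x|,s^2)$ with $F$ smooth, $F(|x|,0)=0$, and $\pa F/\pa(s^2)|_{s=0}=|x|n(|x|)/T(|x|)>0$. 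Extracting the square root (permissible by the positivity of the leading coefficient) and applying the implicit function theorem yields a unique smooth solution odd in $s$, with leading expansion
\begin{equation*}
  \wt\psi-\varphi=s\sqrt{|x|n(|x|)/T(|x|)}+O(s^3).
\end{equation*}

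The main identity $T(\langle x,\bn(\psi)\rangle)=\langle y,\bn(\wt\psi)\rangle$ then holds by construction, since $\langle y,\bn(\wt\psi)\rangle=|y|\cos(\wt\psi-\varphi)=T(|x|)\cdot T(|x|\cos s)/T(|x|)$. The derivative formula on $\Pi$ is read off directly from the expansion above, giving $\pa\wt\psi/\pa\psi|_\Pi=\sqrt{|x|n(|x|)/T(|x|)}$ and, in particular, $\pa\wt\psi/\pa\psi\ne0$ near $\Pi$, so $(x,\psi)\mapsto(y,\wt\psi)$ is indeed a local diffeomorphism (note that $g$ itself is a global diffeomorphism of $\mathbb{R}^2$ since $r\mapsto T(r)$ is a smooth radial diffeomorphism, smooth at the origin because $T(r)/r\to n(0)$).

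The main obstacle is exactly the $0/0$ indeterminacy encountered if one differentiates $\cos(\wt\psi-\varphi)=T(|x|\cos s)/T(|x|)$ implicitly in $\psi$ and restricts to $\Pi$: both $\sin(\wt\psi-\varphi)$ and $\sin s$ vanish there. The even-in-both-variables symmetry of the relation, together with the positivity of $n(r)$, is precisely what allows the square-root extraction above to proceed cleanly and produces a smooth $\wt\psi$ across $\Pi$; absent this structural observation, one would be forced into either a second implicit differentiation or a L'H\^opital argument, neither of which transparently delivers smoothness.
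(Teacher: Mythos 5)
Your proof is correct; the paper itself offers no argument here (it merely calls the lemma ``a straightforward computation''), and your polar-coordinate reduction, with the even-in-$s$/square-root analysis resolving the $0/0$ degeneracy at $s=0$ and $s=\pi$, is exactly the computation intended, carried out carefully at the points where naive implicit differentiation fails. The only step left implicit is that your local branches glue with the elementary formula $\wt\psi-\varphi=\arccos\bigl(T(|x|\cos s)/T(|x|)\bigr)$ on $0<s<\pi$ (and its mirror on $-\pi<s<0$) to give a global, orientation-preserving diffeomorphism $\psi\mapsto\wt\psi$ of the circle --- which is what the change of variables in the integral \eqref{result} actually requires --- but this is immediate from the strict monotonicity of $T$.
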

Using this result and taking $A\equiv 1$, we obtain
\begin{equation}\label{result}
\begin{split}
    [K^h_{\La^2}1](x)&=\BL(\frac{i}{2\pi h}\BR)^{1/2}
    \int_0^{2\pi}e^{\frac{i}{h}\bl\langle\frac{T(\abs{x})}{\abs{x}} x,
    \bn(\wt\psi)\br\rangle}
    \BL(\pd{\wt\psi}{\psi}\BR)^{-1}\,d\wt\psi\\
    &=\BL(\frac{i}{2\pi h}\BR)^{1/2}
    \BL(\int_0^{2\pi}e^{\frac{i}{h}\bl\langle\frac{T(\abs{x})}{\abs{x}} x,
    \bn(\wt\psi)\br\rangle}
    \sqrt{\frac{T(\abs{x})}{\abs{x}n(\abs{x})}}\,d\wt\psi+O(h)\BR).
\end{split}
\end{equation}
The amplitudes in these two integrals coincide on $\Pi$;
consequently, their difference can be represented as the product of
a smooth function by $\tau_\psi$, and integration by parts shows
that the difference of the integrals is indeed $O(h)$ compared with
the main term. Now we finish the computation and find that
\begin{equation}\label{final}
    \BL(\frac{2\pi h}{i}\BR)^{1/2}[K_{\La^2}^h1](x,k)=
  a(\abs{x})
    \mathbf{J}_0\BL(\frac{T(\abs{x})}{h}\BR)+O(h),\quad a(\abs{x})=2\pi \sqrt{\frac{T(\abs{x})}{\abs{x}n(\abs{x})}}.
\end{equation}
We see that the canonical operator on this manifold gives a
``distorted'' Bessel function: it is multiplied by the factor
$a(\abs{x})$, which tends to unity as $\abs{x}\to\infty$, and,
which is more important, has the phase shift $\delta\tau$ given by
\eqref{phase-shift}: for $R>R_0$, Eq.~\eqref{final} has the form
\begin{equation*}
    \BL(\frac{2\pi h}{i}\BR)^{1/2}[K^h_{\La^2}1](x,h)=
    a(\abs{x})
   \mathbf{ J}_0(k(\abs{x}+\delta\tau))+O(h).
\end{equation*}
\end{example}

\begin{example}
Now consider a simple example from the wave beam theory. Consider
the following Cauchy problem for the Schr\"odinger type equation
arising in optics in the well-known paraxial approximation:
\begin{equation}\label{Schr1a}
  i h\frac{\pa v}{\pa t}
     =-i h c\frac{\pa v}{\pa x_3}
     - h^2\frac{ c}{4k}\BL(\frac{\pa^2 v}{\pa x_1^2}
     +\frac{\pa^2 v}{\pa x_2^2}\BR),\qquad
       v|_{t=0}=v_0.
\end{equation}
Here $c$ and $k$ are physical (positive) constants.  Let us make
the well-known change of variables $z=x_3-ct$ and set
$v(x_1,x_2,x_3,t)=u(x_1,x_2,z)$, where $u$ is the new unknown
function. The equation \eqref{Schr1} acquires the form of the 2D
Schr\"odinger equation
\begin{equation}\label{Schr2}
i h\frac{\pa v}{\pa t}= -\frac{h^2 }{2\mathbf{m}}\BL(\frac{\pa^2
v}{\pa x_1^2}+\frac{\pa^2 v}{\pa x_2^2}\BR),\quad
\mathbf{m}=\frac{2k}{c},\qquad u|_{t=0}=v_0(x_1,x_2,z),
\end{equation}
including the variable $z$ az the a parameter. Let $x$ be the
column 2-vector with components $(x_1,x_2)$. Needless to say, one
can solve the Cauchy problem for Eq.~\eqref{Schr2} by the Fourier
method and obtain the answer in the form of an integral of rapidly
varying functions, but the asymptotic simplification of such
integrals (by methods like the stationary phase method) is a rather
difficult problem, so that the approach based on Maslov's canonical
operator is in our opinion more efficient. Let us choose the
initial data in a form of the Maslov canonical operator on the
family of Lagrangian manifolds
$\La_0(z)=\{p=P^0(\alpha,\psi)\equiv\lambda(z)\mathbf{n}(\psi),
x=X^0\equiv \alpha\mathbf{n}(\psi),\alpha\in \mathbb{R},\tau\in
\mathbb{S}$ parametrized by~$z$,
\begin{equation}\label{Schr3}
    u|_{t=0}=v_0(x_1,x_2,z)=K_{\La_0^2(z)}[A].
\end{equation}
Here $\lambda(z)$ is a smooth function, for example,
$\lambda(z)=1/\sqrt{1+z^2}$ or $\lambda(z)=a+b(1+\tanh z)$, $a$ and
$b$ are constants, $a>0$, and $|b|<a$.

We take the measure density $\mu\equiv1$ and the amplitude
$A(\alpha)$ on $\La_0(z)$ of the form $A(\alpha)=g(\alpha)f(z)$,
where $g(\alpha)$ and $f(z)$ are smooth compactly supported
functions. For simplicity, assume that $g(\alpha)$ is even. Let us
fix the central point $\alpha_0=(\alpha=+0,\psi=0)$ on $\La_0(z)$.
Let us show that the solution \eqref{Schr2} connected with such
type of manifolds the generalizes the solutions known as Bessel
beams in optics.

Let us apply the general Maslov asymptotic construction of
solutions of the Cauchy problem for differential (and
pseudodifferential) equations. To  this end, we shift the manifold
$\Lambda_0(z)$ with the help of the phase flow $g^t_H$
corresponding to the Hamiltonian system with Hamiltonian
$H=\frac{p^2}{2\mathbf{m}}$. One can readily show that the
``shifted'' manifold is
$\La^2_t(z)=g^t_H\La_0(z)\equiv\{p=P(\alpha,\psi)
\equiv\lambda(z)\mathbf{n}(\psi), x=X^0+tH_p(P^0)\equiv
\bl(\alpha+t\frac{\lambda(z)}{\mathbf{m}}\br)\mathbf{n}, \alpha\in
\mathbb{R},\tau\in \mathbb{S}\}$. The central point on $\La^2_t(z)$
is the point $\alpha_0^t$ with coordinates $(\alpha=+0,\psi=0)$.
Then the leading term of the asymptotic solution of
problem~\eqref{Schr2}, \eqref{Schr3} is \cite{Mas1,MaFe1}
\begin{equation}\label{Schr4}
    u=v_0(x_1,x_2,z,t)=e^{i\frac{s(t)}{h}
    -i\pi m(\alpha_0^t)/2}K_{\La^2_t(z)}[A^t],
\end{equation}
where  $s(t)$ is the integral (the action) of the Lagrangian
$L=(\langle p,H_p\rangle-H)=\frac{p^2}{2\mathbf{m}}$ along the
trajectory formed by the central points,
\begin{equation*}
  s(t)=\int_0^t\frac{P^2(0,\psi)}{2\mathbf{m}}dt=
t\frac{\lambda^2(z)}{2\mathbf{m}}
\end{equation*}
and $m(\alpha_0^t)$ is the number of focal points on this
trajectory on the time interval $[0,t]$. One can readily show that
$m(\alpha_0^t)=0$. The amplitude $A^t$ is the solution of the
transport equation on $\La_t^2$, which has the form $\frac{\pa
A^t}{\pa t}=0$, and $A^t|_{t=0}=A_0$; hence $A^t=A(\alpha)$.

Now let us compute $K^h_{\La^2_t(z)}[A^t]$. The eikonal is
\begin{equation}\label{Eik}
    \tau=\int_{(\alpha=+0,\psi=0)}^{(\alpha,\psi)}P\,dX
    =\lambda(z)\alpha,
\end{equation}
and the eikonal coordinates are $(\tau=\lambda\alpha,\psi),
\alpha=\tau/\lambda$. In the eikonal coordinates, we have
$P=\lambda(z)\mathbf{n}(\psi),X=(\frac{\tau}{\lambda(z)}
+t\frac{\lambda(z)} {\mathbf{m}})\mathbf{n}(\psi)$ and
$X_\psi=(\frac{\tau}{\lambda(z)}+t\frac{\lambda(z)}
{\mathbf{m}})\mathbf{n}_\perp(\psi)=(\alpha+t\frac{\lambda(z)}
{\mathbf{m}})\mathbf{n}(\psi)$. We see that the manifolds
(cylinders in $4-D$ phase space $\mathbb{R}^4_{px}$)
$\Lambda^2_t(z)=\Lambda^2_0(z)$ coincide as geometrical objects,
and so $\La^2_0$ is invariant with respect to the phase flow
$g_t^H$. The Lagrangian singularities are again the circle
$\frac{\tau}{\lambda(z)}=-t\frac{\lambda^2(z)} {\mathbf{m}}$, which
moves with time on $\La_0$. Their projection onto the physical
plane is the point $x=0$ (which does not move with time $t$). One
can cover the manifold  $\La^2_0$ by two regular charts
$U_1=\{\tau>-t\lambda^2(z)+\delta\}$ and
$U_3=\{\tau<-t\lambda^2(z)-\delta\}$ and one singular chart
$U_2=\{|\tau+t\lambda^2(z)<2\delta\}$. The solution of the
equations $X(\tau,\psi)=x$ is $\tau=-t\frac{\lambda^2(z)}
{\mathbf{m}}\pm\lambda(z)|x|$, where the sign $+$ is taken for
$U_1$ (where $\tau>t\frac{\lambda^2(z)} {\mathbf{m}}$) and the sign
$-$ is taken for $U_2$ (where
$\tau<t\frac{\lambda^2(z)}{\mathbf{m}}$); $\mathbf{n}(\psi)=x/|x|$.
We specify the Maslov index $m_1=0$ for the points in $U_1$. By
analogy with Example~\ref{3.1}, we find that the Maslov index
$m_2=-1$ in $U_2$. Thus, outside some neighborhood of the origin
the canonical operator gives the formula
\begin{equation*}
 u=e^{-\frac{it\lambda^2(z)}{2\mathbf{m}h}}e^{-i\frac{\pi}{4}}
\frac{\sqrt{\lambda(z)}}{\sqrt{|x|}}
\Bigl(e^{\frac{i\lambda(z)|x|}{h}+\frac{i\pi}{4}}
g\Bl(|x|-t\frac{\lambda(z)}{\mathbf{m}}\Br)+
e^{-\frac{i\lambda(z)|x|}{h}-\frac{i\pi}{4}}
g\Bl(-|x|-t\frac{\lambda(z)}{\mathbf{m}}\Br)\Bigr) f(z).
\end{equation*}
Now let us represent the solution in a neighborhood of the origin
$x=0$. The solution $\tau(x,\psi,t)$ of the equation $\langle
P,x-X\rangle=0$ is $\tau=\langle x,\mathbf{n}(\psi)\rangle
-t\frac{\lambda^2(z)} {2\mathbf{m}}$, and
$\det(P,P_\psi)=\lambda^2(z)$. Since the last determinant does not
vanish anywhere on $\La_t(z)$, we can omit the cutoff function in
the singular chart and everywhere write
\begin{equation*}
  u=\BL(\frac{i}{2\pi h}\BR)^{1/2}e^{-\frac{it\lambda^2(z)}{\mathbf{m}h}}
\lambda(z) \int_0^{2\pi}e^{\frac{i}{h}\lambda(z)\langle
x,\mathbf{n}(\psi)\rangle} g\Bl({\langle
x,\mathbf{n}(\psi)\rangle}-t\frac{\lambda(z)}{2\mathbf{m}}\Br)f(z)\,d\psi.
\end{equation*}
Now, using the same argument as in the derivation
of~\eqref{eq:01-2c}, we obtain
\begin{multline*}
u=\pi \BL(\frac{i}{2\pi
h}\BR)^{1/2}e^{-\frac{it\lambda^2(z)}{2\mathbf{m}h}}
   \lambda(z)f(z)\BL[\BL(g\BL({|x|}-t\frac{\lambda(z)}{\mathbf{m}}\BR)+g\BL(-|x|-
   t\frac{\lambda(z)}{\mathbf{m}}\BR)\BR)\cJ_0\BL(\frac{\lambda(z)\abs{x}}{h}\BR)
    \\{}+
i\BL(g\BL(|x|-t\frac{\lambda(z)}{\mathbf{m}}\BR)-g\BL(-|x|-
   t\frac{\lambda(z)}{\mathbf{m}}\BR)\BR)
\cJ_1\BL(\frac{\lambda(z)\abs{x}}{h}\BR)\BR]\bigg|_{z=x_3-ct}.
\end{multline*}
\end{example}

\renewcommand{\sectionname}{APPENDIX}
\setcounter{section}{0}

\section{PROOF OF THEOREM~\ref{th000}}\label{sec:02-06}

We denote the product $A_j(\tau,\psi)\sqrt{\mu(\tau,\psi)}$ by
$\ph(\tau,\psi)$. Without loss of generality, we assume that the
support $\supp\ph$ is contained in a small neighborhood of a
singular point $(\tau_0,\psi_0)\in U_j$. Let us represent the
canonical operator on functions supported in a neighborhood of
$(\tau_0,\psi_0)$ in the form of the standard canonical operator in
a singular chart. Prior to this, we make a rotation of the
coordinate system on the $x$-plane (and the associated rotation of
the dual coordinate system on the $p$-plane)so as to ensure that
$P_2(\tau_0,\psi_0)=0$.
\begin{lemma}
Under the condition $P_2(\tau_0,\psi_0)=0$, the variables
$(x_1,p_2)$ can be taken for canonical coordinates on the
Lagrangian manifold in a neighborhood of the point
$(\tau_0,\psi_0)$.
\end{lemma}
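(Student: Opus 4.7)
The plan is to reduce the statement to a single Jacobian computation. By definition, $(x_1,p_2)$ serve as canonical coordinates on $\La^2$ near $(\tau_0,\psi_0)$ precisely when the map $(\tau,\psi)\longmapsto(X_1(\tau,\psi),P_2(\tau,\psi))$ is a local diffeomorphism there. By the inverse function theorem, this is equivalent to showing
\[
\det\frac{\pa(X_1,P_2)}{\pa(\tau,\psi)}\bigg|_{(\tau_0,\psi_0)}
= X_{1\tau}P_{2\psi}-X_{1\psi}P_{2\tau}\bigg|_{(\tau_0,\psi_0)}\ne 0.
\]

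The first step is to exploit the relations of Lemma~\ref{pr:02-01} in conjunction with the normalization $P_2(\tau_0,\psi_0)=0$. At that point the identities $\langle P,X_\tau\rangle=1$ and $\langle P,X_\psi\rangle=0$ collapse to $P_1 X_{1\tau}=1$ and $P_1 X_{1\psi}=0$. Condition~\ref{co:02-1} combined with $\langle P,X_\tau\rangle=1$ forces $P\ne0$, and with $P_2=0$ this means $P_1(\tau_0,\psi_0)\ne 0$. Hence $X_{1\tau}=1/P_1$ and $X_{1\psi}=0$ at $(\tau_0,\psi_0)$, and the Jacobian above reduces to $P_{2\psi}/P_1$.

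The second step is to verify that $P_{2\psi}(\tau_0,\psi_0)\ne 0$. Expanding $\det(P,P_\psi)=P_1 P_{2\psi}-P_2 P_{1\psi}$ at $(\tau_0,\psi_0)$ and using $P_2=0$ yields $\det(P,P_\psi)=P_1 P_{2\psi}$. The non-vanishing of $\det(P,P_\psi)$ is already guaranteed either by part (iii) of the preceding lemma on singular charts (for a focal point $\a^*$) or, equivalently, by $\widetilde{\cJ}=\mu\det(P,P_\psi)\ne 0$ together with $\mu\ne 0$. Since $P_1\ne 0$, one concludes $P_{2\psi}\ne 0$, and therefore the Jacobian is nonzero. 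Inverting locally yields smooth functions $\tau(x_1,p_2)$ and $\psi(x_1,p_2)$, and substituting them into $X_2(\tau,\psi)$ and $P_1(\tau,\psi)$ expresses the remaining phase-space coordinates through $(x_1,p_2)$, confirming their role as canonical coordinates on $\La^2$.

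The main obstacle is not computational but bookkeeping: one must carefully track how the normalization $P_2(\tau_0,\psi_0)=0$ simplifies the eikonal identities and couple this with the two separate non-vanishing facts already available ($P\ne 0$ from Condition~\ref{co:02-1}, and $\det(P,P_\psi)\ne 0$ from the singular-chart lemma). Once these are assembled, the Jacobian computation is essentially $2\times 2$ and the conclusion is immediate.
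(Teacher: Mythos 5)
Your proof is correct and follows essentially the same route as the paper: both use the eikonal relations to get $P_1X_{1\tau}=1$ and $X_{1\psi}=0$ at the point, use $\det(P,P_\psi)=P_1P_{2\psi}\ne0$ to get $P_{2\psi}\ne0$, and conclude that the Jacobian $X_{1\tau}P_{2\psi}$ is nonzero. Your explicit justification that $P_1\ne0$ (via Condition~\ref{co:02-1}) is a small point the paper leaves implicit, but the argument is the same.
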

\begin{proof}
We make all computations at $(\tau_0,\psi_0)$. Since $P_2=0$, it
follows that $\langle P,X_\tau\rangle\equiv P_1X_{1\tau}$ and hence
$X_{1\tau}\ne0$. Next, $0\ne\det(P,P_\psi)\equiv P_1P_{2\psi}$,
whence it follows that $P_{2\psi}\ne0$, and finally, $0=\langle
P,X_\psi\rangle\equiv P_1X_{1\psi}$, whence it follows that
$X_{1\psi}=0$, because $P_1\ne0$. Consequently,
\begin{equation*}
    \mathcal{J}_s\equiv\det\frac{(X_1,P_2)}{(\tau,\psi)}=
    \det\begin{pmatrix}
      X_{1\tau} & X_{1\psi} \\
      P_{2\tau} & P_{2\psi} \\
    \end{pmatrix}=\det\begin{pmatrix}
      X_{1\tau} & 0 \\
      P_{2\tau} & P_{2\psi} \\
    \end{pmatrix}=X_{1\tau}P_{2\psi}\ne0.
\end{equation*}
\end{proof}
Let us write out the canonical operator\footnote{Here we have
included the Maslov index in the argument of the Jacobian, write
Jacobians themselves instead of their absolute values, and
accordingly drop the factors of the form $e^{-im_j\pi/2}$
altogether.} in the coordinates $(x_1,p_2)$:
\begin{equation}\label{J-a}
[K\ph](x,h)=\BL(\frac{i}{2\pi h}\BR)^{1/2}
    \int e^{\frac ih
    S(x_1,p_2)+p_2x_2 }\frac{\ph}{\sqrt{\mathcal{J}_s}}
      dp_2.
\end{equation}
Let us apply the Fourier transform with respect to the
variable~$x_2$ to~\eqref{eq:02-10} and~\eqref{J-a} and prove that
the resulting expressions coincide modulo $O(h)$; i.e.,
\begin{equation}\label{J-c}
    \frac{1}{2\pi h}
    \int \left[e^{\frac ih
    (\tau-p_2x_2) }\ph\sqrt{\det(P,P_\psi)}
      \right]_{\tau=\tau(x,\psi)} d\psi dx_2=
    e^{\frac ih
    S(x_1,p_2)+p_2x_2 }\frac{\ph}{\sqrt{\mathcal{J}_s}}+O(h).
\end{equation}
Let us compute the expression on the left-hand side in~\eqref{J-c}
by the stationary phase method. We write
$\tau-p_2x_2=\Phi(\psi,x_2)$ and obtain, modulo $O(h)$,
\begin{equation*}
    \frac{1}{2\pi h}
    \int \left[e^{\frac ih
    (\tau-p_2x_2) }\ph\sqrt{\det(P,P_\psi)}
      \right]_{\tau=\tau(x,\psi)} d\psi dx_2\simeq
      \frac1i
      \frac{e^{\frac ih\Phi}\ph\sqrt{\det(P,P_\psi)}}
      {\sqrt{\det{(-\Phi'')}}}\bigg|_{\text{at the stationary point}},
\end{equation*}
where the argument of the determinant $\det{(-\Phi'')}$ is computed
as the sum of arguments of the eigenvalues of the $2\times2$ matrix
$-\Phi''$, the argument being take in the set $\{-\pi,0\}$.

Now let us carry out the computations at the point
$(x_1,p_2)=(X_1(\tau_0,\psi_0),P_2(\tau_0,\psi_0))$. The function
$\tau(x,\psi)$ is determined from the equation
\begin{equation}\label{J-vk}
    \langle P,x-X\rangle=0,
\end{equation}
from which, by differentiating with the properties of the eikonal
coordinates taken into account, we obtain
\begin{equation*}
    \tau_\psi=\frac{\langle x-X,P_\psi\rangle}{1-\langle
    x-X,P_\tau\rangle},\qquad
    \tau_x=\frac{P}{1-\langle
    x-X,P_\tau\rangle}.
\end{equation*}
Thus, the stationary point equations, together with \eqref{J-vk},
give
\begin{equation*}
    \langle P,x-X\rangle=0,\qquad \langle P_\psi,x-X\rangle=0,
    \qquad p_2=\frac{P_2}{1-\langle
    x-X,P_\tau\rangle},
\end{equation*}
whence, by the condition $\det(P,P_\psi)\ne0$, we have
\begin{equation*}
    x=X,\qquad p_2=P_2.
\end{equation*}
For the second derivatives of the phase function at the stationary
point, we obtain
\begin{align}
    \Phi_{\psi\psi}&=-\langle X_\psi,P_\psi\rangle,\\
    \Phi_{\psi x_2}&=P_{2\psi}-\langle X_\tau,P\psi\rangle
    P_2=P_{2\psi},\\
    \Phi_{x_2x_2}&=P_\tau P_2+P_2 \langle
    x-X,P_\tau\rangle'_{x_2} =0,
\end{align}
whence it follows that
\begin{align*}
   \det{(-\Phi'')}&=\det\begin{pmatrix}
     \langle X_\psi,P_\psi\rangle & -P_{2\psi} \\
     -P_{2\psi} & 0 \\
   \end{pmatrix}
   =-P_{2\psi}^2,\\
   \arg\det{(-\Phi'')}&\overset{\e=1}=\arg\det\begin{pmatrix}
     \e\langle X_\psi,P_\psi\rangle & -P_{2\psi} \\
     -P_{2\psi} & 0 \\
   \end{pmatrix}\\&\overset{\e\to0}=
   \arg\det\begin{pmatrix}
     0 & -P_{2\psi} \\
     -P_{2\psi} & 0 \\
   \end{pmatrix}=-\pi.
\end{align*}
(One eigenvalue is negative, and the other is positive.) Let us
show that
\begin{equation}\label{J-f}
    \frac{1}{\sqrt{\mathcal{J}_s}}=\frac{\sqrt{\det(P,P_\psi)}}
      {i\sqrt{\det{(-\Phi'')}}}.
\end{equation}
First of all, note that
\begin{equation*}
    \mathcal{J}_s \det(P,P_\psi)=
    X_{1\tau}P_{2\psi}\,P_1P_{2\psi}=P_1X_{1\tau}P_{2\psi}^2
    =(P_1X_{1\tau}+P_2X_{2\tau})P_{2\psi}^2=P_{2\psi}^2
    =-\det{(-\Phi'')}
\end{equation*}
and relation~\eqref{J-f} holds, because
\begin{equation*}
    \arg\det(P,P_\psi)=\arg\det(-\Phi'')
    +2\arg i-\arg \mathcal{J}_s=\pi-\pi-\arg \mathcal{J}_s
    =-\arg \mathcal{J}_s.
\end{equation*}
Now if we use the rule for choosing the argument of the Jacobian in
the standard singular chart and make the above-mentioned rotation,
then, in terms of the original coordinates, we obtain the rule for
choosing the argument of the singular Jacobian indicated in
Sec.~\ref{sec:02-05}.

\section{REPRESENTATION OF SOLUTIONS\\
IN A NEIGHBORHOOD OF THE CAUSTICS\\
VIA THE AIRY AND PEARCY FUNCTIONS}\label{sec:a2}

Let us show how the function~\eqref{eq:02-10} can be expressed near
the caustic via the Airy and Pearcy functions for a Lagrangian
manifold in \textit{general position} \cite{Arn2}. We include the
factors $\sqrt{\mu}$ and $e^{-i\pi m_j/2}$ in~\eqref{eq:02-10} in
the amplitude and consider the integral
\begin{equation}\label{COM}
    \cI\equiv\cI(x,h)=\Bl(\frac{i}{2\pi h}\Br)^{1/2}
    \int \bl[e^{\frac ih\tau(\psi,x)}
    A(\tau,\psi)\abs{\det(P,P_\psi)}^{1/2}\br]_{\tau=\tau(\psi,x)}d\psi,
\end{equation}
where the support $\supp A(\tau,\psi)$ lies in a neighborhood of a
focal point $(\tau^*,\psi^*)\in\La^2$. We compute the
integral~\eqref{COM} in a neighborhood of the
projection\footnote{Objects calculated at the point
$(\tau^*,\psi^*)$, are equipped with the superscript $^*$ (e.g.,
$X_{\psi\psi}^*=X_{\psi\psi}(\tau^*,\psi^*)$), which does
\textit{not} mean Hermitian conjugation in this section.}
$x^*=X^*\equiv X(\tau^*,\psi^*)$ of that point onto the
configuration space~$\mathbb{R}^2_x$. Let us subject the Lagrangian
manifold $\La^2$ to the following technical condition (which is
satisfied in all specific examples considered in Sec.~\ref{sec:03}
and without which the definitive formulas would be much more
awkward).
\begin{condition}\label{cond-t}
The Lagrangian manifold $\La^2$ lies in a level set of some
Hamiltonian of the form $H(x,p)=F(x,\abs{p})$.
\end{condition}

There exist two possible case for a focal point $(\tau^*,\psi^*)$
in general position \cite{Arn2,MaFe1}:
\begin{enumerate}
    \item[(a)] $X_{\psi\psi}^*\ne0$, or, equivalently,
$\mathcal{J}_\psi^*\ne0$ (an $A_2$ singularity, or
a~\textit{fold}).
    \item[(b)] $X_{\psi\psi}^*=0$, but $X_{\psi\psi\psi}^*\ne0$;
or, equivalently, $\mathcal{J}_\psi^*=0$, but
$\mathcal{J}_{\psi\psi}^*\ne0$ (an $A_3$ singularity, or
a~\textit{cusp}).
\end{enumerate}
(Recall that $\mathcal{J}$ is the nonsingular
Jacobian~\eqref{eq:02-Je0}.) It turns out that the
integral~\eqref{COM} can be expressed in a small neighborhood of
the point~$x^*$ of the caustic via the Airy function
\begin{equation}\label{Ai0}
\mathrm{Ai}(y)=
\frac{1}{\pi}\int_0^\infty\cos\BL(\frac{\eta^3}{3}+y\eta\BR)\,d\eta=
\frac{1}{2\pi}\int_{-\infty}^\infty\exp\BL(i\BL(\frac{\eta^3}{3}+y\eta\BR)\BR)\,d\eta
\end{equation}
in case~(a) and via the Pearcy functions
\begin{equation}\label{Pe1}
\mathrm{P}^{\pm}(v,y)= \frac{1}{2\pi}\int_{-\infty}^\infty
\exp\bigl(i(y\eta+v\eta^2\pm\eta^4)\bigr)d\eta
\end{equation}
in case~(b). Namely, the following theorem holds.
\begin{theorem}\label{L1}
Under the above-mentioned conditions, the following asymptotic
expansions hold in an $O(h^{5/6})$-neighborhood of the
point~$x^*$\rom:
\begin{multline}\label{COMAiry}
    \mathcal{I}=\frac{}
                     { }
    \frac{e^{-i\pi/4}2^{5/6}\sqrt{\pi\smash[b]{|P^*||P^*_\psi|}}}
    {\sqrt[3]{\smash[b]{|\langle P^*,X^*_{\psi\psi}\rangle|}}\sqrt[6]{h}}
    A(\tau^*,\psi^*)
    \exp \Bl(\frac{i}{h}(\tau^*+\langle P^*,x-X^*\rangle)\Br)
    \\
    \times\mathrm{Ai}\BL(-\frac{2^{1/3}\langle P^*_\psi,x-X^*\rangle }{{h^{2/3}
    \sqrt[3]{\smash[b]{\langle P^*,X^*_{\psi\psi}\rangle}} }}\BR)+O(\sqrt[6]{h})
\end{multline}
in case~\rom{(a)} and
\begin{multline}\label{COMPearcy}
\mathcal{I}=\frac{e^{-i\pi/4}\sqrt[4]{6}}
                 { \sqrt{\pi} \sqrt[4]{h}}
            \frac{\sqrt{\smash[b]{|P^*||P^*_\psi|}}}
            {\sqrt[4]{\smash[b]{|\langle
P^*_\psi,X_{\psi\psi\psi}\rangle|}}}A(\tau^*,\psi^*)
\exp{\Bl(\frac{i}{h}(\tau^*+\langle P^*,x-X^*\rangle)\Br)}
\\
\times\mathrm{P}^\pm\left(\sqrt[4]{{\frac{24}{h^3|\smash[b]{\langle
P^*_\psi,X_{\psi\psi\psi}\rangle}|}}}
  \langle P^*_\psi,x-X^*\rangle  , \sqrt{\frac{6}{h |
  \langle \smash[b]{P^*_\psi,X_{\psi\psi\psi}}\rangle|}}
  \langle P^*_{\psi\psi},x-X^*\rangle   \right)+O(1)
\end{multline}
in case~\rom{(b)}, where the upper sign on~$\mathrm{P}^\pm$ is
taken for $\langle \smash[b]{P^*_\psi,X_{\psi\psi\psi}}\rangle>0$
and the lower sign is taken in the opposite case.
\end{theorem}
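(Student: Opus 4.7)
The strategy is a normal-form / stationary-phase reduction of $\mathcal{I}$ near $(\psi^*, x^*)$: Taylor expand the phase $\tau(\psi,x)$, defined implicitly by $\langle P, x-X\rangle=0$ (Eq.~\eqref{eq:02-08}), reduce it to its catastrophe normal form, and recognize the rescaled $\psi$-integral as an Airy function in case~(a) or a Pearcy function in case~(b). First, differentiate the defining relation implicitly to get at $(\psi^*, x^*)$ the values $\tau_\psi^*|_{x^*}=0$ and $\tau_x^*|_{x^*}=P^*$. Two structural identities simplify the expansion considerably. The eikonal relations of Lemma~\ref{pr:02-01} combined with the focal condition $\det(X_\tau, X_\psi)=0$ force $X_\psi^*=0$: by the focal condition $X_\psi^*=cX_\tau^*$ for some $c$, and then $\langle P^*, X_\psi^*\rangle = c\langle P^*, X_\tau^*\rangle = c$ must vanish. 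On the other hand Condition~\ref{cond-t} ($H=F(x,|p|)$ constant on $\La^2$), differentiated in $\psi$ at the focal point using $X_\psi^*=0$, yields $\langle P_\psi^*, P^*\rangle = 0$, whence $|\det(P^*, P_\psi^*)| = |P^*||P_\psi^*|$, already matching the square-root prefactors of~\eqref{COMAiry} and~\eqref{COMPearcy}.

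Continuing the implicit differentiation, with the chain rule carefully tracked through $\tau(\psi)$, one finds $\tau_{\psi\psi}^*|_{x^*}=0$ and $\tau_{\psi x}^*|_{x^*}=P_\psi^*$. In case~(a) the first nonvanishing $\psi$-derivative at $x=x^*$ is of order three and is the geometric invariant appearing under the cube root in~\eqref{COMAiry}; in case~(b), where in addition $X_{\psi\psi}^*=0$, it is of order four and involves $\langle P_\psi^*, X_{\psi\psi\psi}^*\rangle$, together with the mixed cross-term $\tau_{\psi\psi x}^*|_{x^*}=P_{\psi\psi}^*$. Substituting the truncated Taylor polynomial for $\tau(\psi,x)$ into~\eqref{COM}, factoring out $e^{\frac{i}{h}(\tau^* + \langle P^*, y\rangle)}$ with $y=x-x^*$, and replacing the smooth amplitude $A(\tau,\psi)|\det(P, P_\psi)|^{1/2}$ by its value at $(\tau^*, \psi^*)$, the integral reduces to a one-dimensional oscillatory integral in $s=\psi-\psi^*$. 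Rescaling $s=\lambda\eta$ with $|\lambda|$ proportional to $h^{1/3}$ in case~(a) or $h^{1/4}$ in case~(b), chosen so as to normalize the top-order polynomial coefficient, identifies this integral via~\eqref{Ai0} or~\eqref{Pe1} as $2\pi\,\mathrm{Ai}$ or $2\pi\,\mathrm{P}^{\pm}$ with the stated arguments, and collecting $(i/(2\pi h))^{1/2}\cdot|\lambda|$ produces exactly~\eqref{COMAiry} and~\eqref{COMPearcy}, with the Pearcy sign dictated by $\sign\langle P_\psi^*, X_{\psi\psi\psi}^*\rangle$.

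The principal obstacle is the uniform error analysis on the $O(h^{5/6})$-neighborhood of $x^*$. One must show that (i)~the tail of the $s$-integral for $|s|\gg\lambda$ is negligible, handled by repeated integration by parts against the truncated polynomial phase on regions away from its limit stationary points; (ii)~the Taylor remainder in the phase contributes an additional $h^{1/3}$ (resp.\ $h^{1/4}$) per unretained power of $s$, which combined with the $h^{-1/6}$ (resp.\ $h^{-1/4}$) gain from the Airy (resp.\ Pearcy) prefactor yields an overall $O(h^{1/6})$ (resp.\ $O(1)$) remainder; (iii)~the variation of the amplitude over the $\lambda$-window is of the same subdominant size; and (iv)~the complement of a small fixed neighborhood of $\psi^*$ contributes $O(h^\infty)$ by the non-stationary-phase estimate applied to the original untruncated phase. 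Keeping track of the interplay between the $h$-scaling and the $y$-dependence uniformly in the $O(h^{5/6})$-window is the technically most delicate part, but it reduces to standard oscillatory-integral estimates in the theory of catastrophe integrals.
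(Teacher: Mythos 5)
Your proposal is correct and follows essentially the same route as the paper: implicit differentiation of the phase $\tau(\psi,x)$ defined by \eqref{eq:02-08}, the structural identities $X_\psi^*=0$ and $\langle P^*_\psi,P^*\rangle=0$ (hence $|\det(P^*,P^*_\psi)|=|P^*||P^*_\psi|$) obtained from Lemma~\ref{pr:02-01}, the focal condition and Condition~\ref{cond-t}, truncation of the Taylor expansion at order $3$ or $4$, rescaling by $h^{1/3}$ or $h^{1/4}$ to produce the Airy/Pearcy integrals, and the same bookkeeping of remainders over the $O(h^{5/6})$-window. The only organizational difference is that the paper first isolates the oscillatory-integral reduction as a standalone lemma (Lemma~\ref{L2}) with parameter-dependent polynomial coefficients and then computes those coefficients by an iteration on the implicit equation, whereas you perform the reduction directly; the content is the same.
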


\begin{proof}
First, let us derive some general formulas for integrals of rapidly
oscillating functions. Let $z=(z_1,\ldots,z_n)$ be a vector of real
parameters, $|z|\leq\varepsilon_0$, and let $\Phi(\beta,z)$ be a
smooth function with Taylor series expansion
\begin{gather*}
\Phi(\beta,z)=\Phi^{(3)}(\beta,z)+O(\beta^4)=
\Phi^{(4)}(\beta,z)+O(\beta^5),\\
\Phi^{(3)}=q_0(z)+q_1(z)\beta+\frac{q_2(z)}{2}\beta^2
+\frac{q_3(z)}{6}\beta^3+O(\beta^4),\qquad
 \Phi^{(4)}=\Phi^{(3)}+ \frac{q_4(z)}{24}\beta^4
\end{gather*}
whose coefficients~$q_j(z)$ in turn have the expansions
\begin{equation}\label{aaa}
\begin{gathered}
 q_0=a_0+\langle b_0,z\rangle,\quad q_1=a_1+\langle
 b_1,z\rangle+O(z^2), \quad q_2=a_2+\langle b_2,z\rangle+O(z^2),\\
  q_3=a_3+O(z),\quad q_4=a_4+O(z).
\end{gathered}
\end{equation}
Let $f(\beta,z)$ be a smooth function vanishing for
$\abs{\be}>\be_0$, where $\be_0$~is sufficiently small.

\begin{lemma}\label{L2}
\rom{(i)} If $a_3(0)\neq 0$, then, for $z$ in an
$O(h^{5/6})$-neighborhood of zero, one has
\begin{align}\label{Ai1}
&\int_\mathbb{R} f(\beta,z)e^{\frac{i\Phi(\beta,z)}{h}}d \beta=
\int_\mathbb{R}
f(0,z)e^{\frac{i\Phi^{(3)}(\beta,z)}{h}}d\beta+O(h^{2/3})
 \\\label{Ai2}
&\quad=2\pi f(0,z)\sqrt[3]{{\frac{2 h}
{q_3}}}\exp{\Bl(\frac{i}{h}\Bl(q_0+\frac{q_2^3 }{3 q_3^2 }-\frac{
q_1 q_2 }{ q_3 }\Br)\Br)}{\rm{sign}}(q_3)\mathrm{Ai}\Bigl( \frac{2
q_1 q_3- q_2^2 ) }{2^{2/3} q_3^{4/3} h^{2/3}}\Bigr)+O(h^{2/3})
 \\\label{Ai3}
&\quad= 2\pi f(0,0)\sqrt[3]{\frac{2 h}
{|a_3|}}\exp{\Bl(\frac{i}{h}(a_0+\langle
b_0,z\rangle)\Br)}\mathrm{Ai}\Bigl(\frac{2 \langle b_1,z\rangle
}{2^{2/3} {h^{2/3} \sqrt[3]{a_3} }}\Bigr)+O(h^{2/3}).
\end{align}
\rom{(ii)} If $a_3=0$ but $a_4\neq 0$, then, for $z$ in an
$O(h^{7/8})$-neighborhood of zero,
 \begin{align}\label{int4}
\int_\mathbb{R}
 &f(\beta,z)e^{\frac{i\Phi(\beta,z)}{h}}d \beta=
\int_\mathbb{R}
f(0,z)e^{\frac{i\Phi^{(4)}(\beta,z)}{h}}d\beta+O(h^{1/2})
 \\\label{int5}
&\begin{aligned} &=f(0,z)\sqrt[4]{\frac{24 h}
{|a_4|}}\exp{\bigl(\frac{i}{h}(q_0\pm(- q_1q_3 +\frac{ q_2 q_3^2 }{
2 q_4 }-\frac{  q_3^4 }{ 8 q_4^2
}))\bigr)}
 \\
&\qquad\qquad\times\mathrm{P}^\pm\Bigl(\sqrt[4]{\frac{24}{ h^3|q_4|
}}\bigl( q_1 + \frac{q_3^3}{3 q_4^2} - \frac{q_2q_3}{q_4}\bigr),
\sqrt{\frac{6}{h |a_4|}} (q_2 - \frac{a_3^2}{2q_4})
\Bigr)+O(h^{1/2})
\end{aligned}
 \\\label{int6}
&
\begin{aligned}
&=f(0,0)\sqrt[4]{\frac{24 h}
{|a_4|}}\exp{\bigl(\frac{i}{h}(a_0+\langle b_0,z\rangle)\bigr)}
\\&\qquad\qquad\times\mathrm{P}^\pm\Bigl(\sqrt[4]{\frac{24}{
 h^3|a_4|  }}
 \langle b_1,z\rangle , \sqrt{\frac{6}{h |a_4|}}
 \langle b_2,z\rangle  \Bigr)+O(h^{1/2}),
\end{aligned}
\end{align}
where the sign on $\mathrm{P}^\pm$ is taken according to the sign
of~$a_4$.
\end{lemma}
\begin{proof}[Proof\rm of Lemma~\ref{L2}]
(i) Since $q_3(0)\neq0$, it follows that $|q_3(z)|>C>0$ in an
$O(h^{5/6})$-neighborhood of the point $z=0$, where the
constant~$C$ is independent of $h$. This permits one to apply the
theory in \cite{Fedorjuk, ArnoldGusVarch} to the original integral
and obtain~\eqref{Ai1}. To proceed to~\eqref{Ai2}, in the
integral~\eqref{Ai0} one should make the change of variables
$\beta=q y -a_2/q_3$, $q= \sqrt[3]{{2 h}/ {q_3}}$, choosing the
sign with regard for the sign of $q_3$. Since $|z|<h^{5/6}$, we see
that the expansion~\eqref{aaa} gives
\begin{equation*}
  q_0+\frac{q_2^3 }{3 q_3^2 }-\frac{ q_1 q_2 }{ q_3 }=a_0=a_0+\langle b_0,z\rangle+O(z^2),
 \qquad
 \frac{2  q_1 q_3- q_2^2  }{2^{2/3} q_3^{4/3} h^{2/3}}=\frac{2 \langle b_1,z\rangle }{2^{2/3}
 \sqrt[3]{a_3} h^{2/3}}+\frac{O(z^2)}{h^{2/3}};
\end{equation*}
moreover, $O(z^2)h^{-2/3}=O(h)$ and $O(z^2)h^{-1}=O(h^{2/3})$.
Hence in an $O(h^{5/6})$-neighbor\-hood of the point $z=0$ one can
replace \eqref{Ai2} by \eqref{Ai3}.

The proof of~(ii) is similar. First, using the argument in
\cite{Fedorjuk,ArnoldGusVarch}, we obtain~\eqref{int4}. To obtain
\eqref{int5} for $q_4>0$, we make the change of variables $\beta=q
y - q_3/a_4$, $q= \sqrt[4]{24 h/ {q_4}}$ in the integral
$\int_\mathbb{R} f(0,z)e^{\frac{i\Phi^{(4)}(\beta,z)}{h}}d\beta$.
For $q_4<0$, we consider the complex conjugate integral and use the
substitution $q_j\to -q_j$ to reduce the proof of~\eqref{int5} to
the preceding. Again using the expansion for $q_j(z)$, we find that
\begin{gather*}
 q_0-q_1q_3+\frac{q_2 q_3^2 }{2 q_4 }-\frac{ q_3^4  }{ 8q_4^2 }
 =q_0(0)+\langle b_0,z\rangle+O(z^2),\\
 \frac{q_1+   \frac{q_3^3}{3q_4^2}- \frac{q_2 q_3}{q_4} }{ h^{3/4}\sqrt[4]{q_4}}
 =\frac{ \langle b_1,z\rangle }{ \sqrt[4]{a_4} h^{3/4}}+\frac{O(z^2)}{h^{3/4}},
\qquad
 \frac{q_2 - \frac{q_3^2}{2q_4}}{\sqrt{h q_4}}
 = \frac{\langle b_2,z\rangle }{\sqrt{h a_4}}
 +\frac{O(z^2)}{\sqrt{h}}.
\end{gather*}
If we assume that  $|z|<h^{7/8}$, then $O(z^2) h^{-3/4}=O(h)$,
$O(z^2)h^{-1/4}=O(h^{3/2})$, $O(z^2)h^{-1}=O(h^{3/4})$, and in an
$O(h^{5/6})$-neighborhood of the point $z=0$ we
obtain~\eqref{int6}.
\end{proof}

\begin{remark}
The passage from~\eqref{int4} to~\eqref{int6} produces an error of
$O(h^{3/4})$. Hence the largest error results from the truncation
of the amplitude and phase function of the original integral.
\end{remark}

Let us return to the proof of the theorem. Let us apply
Lemma~\ref{L2} to the integral~\eqref{COM}. The change of variables
$\beta=\psi-\psi^*$, $z=x-X^*$ in~\eqref{COM} gives the
integrals~\eqref{Ai1} and~\eqref{int4} with
$\Phi=\tau(\psi^*+\beta,X^*+z)$ and
$f=g(\tau,\psi)\sqrt{|\det(P,P_\psi)|}$. Next, let us compute the
coefficients~$a_j$ and $b^0,b^1,b^2$ in~\eqref{aaa} by
using~\eqref{eq:02-08} and by computing the derivatives of
$P_\psi$, $X_{\psi\psi}$ etc.\ at the focal point
$(\tau^*,\psi^*)$. By differentiating the relations $\langle
P,X_\psi\rangle=0$ and $F(x,\abs{p})=\const$ (see
condition~\ref{cond-t}) with respect to $\psi$, we obtain
 \begin{gather*}
    \langle P_\psi ,X_\psi\rangle+\langle P ,X_{\psi\psi}\rangle=0,\quad
    \langle P_{\psi\psi} ,X_\psi\rangle+2\langle P_\psi ,X_{\psi\psi}\rangle+\langle P ,X_{\psi\psi\psi}\rangle=0,\\\langle P_{\psi\psi\psi} ,X_{\psi}\rangle+3\langle P_{\psi\psi} ,X_{\psi\psi}\rangle+3\langle P_\psi ,X_{\psi\psi\psi}\rangle+\langle P ,X_{\psi\psi\psi\psi}\rangle=0,\\
    \langle P_\psi ,P\rangle=n(X)\langle n_x(X),X_\psi\rangle
     \end{gather*}
By setting  $\psi=\psi^*$ and $\tau=\tau^*$, we find that
\begin{gather}\label{eq-s2}
   X^*_\psi=0,\quad \langle P^* ,X^*_{\psi\psi}\rangle=0,\quad
    \langle P^* ,X^*_{\psi\psi\psi}\rangle=-2\langle P^*_\psi ,
    X^*_{\psi\psi}\rangle \quad\text{in case  (a)},\\X^*_\psi=X^*_{\psi\psi}=0,\quad
    \langle P^* ,X^*_{\psi\psi\psi}\rangle=0,
    \quad \langle P^* ,X^*_{\psi\psi\psi\psi}\rangle
    =-3\langle P^*_\psi ,X^*_{\psi\psi\psi}\rangle\quad\text{in case (b)},\label{eq-s3}\\
    \langle P^*_\psi ,P^*\rangle=0\label{eq-s4}
\end{gather}
Note that the 4-vector $\begin{pmatrix}
P_{\psi}\\X_{\psi}\end{pmatrix}$ is nondegenerate, because
$\dim\Lambda =2$. Hence $P^*_{\psi}\neq 0$ and
\begin{equation}\label{eq-s4a}
 |\det(P^*,P^*_{\psi})|=|P^*||P^*_{\psi}|.
\end{equation}
Let us find the coefficients~$a_j$ and~$b_j$ from the expansion
$\psi(\beta,z)=\tau(\psi^*+\beta,X*+z)\tau^*+\Delta(\beta,z)$,
where $\tau(\psi,x)$ is the solution of Eq.~\eqref{eq:02-08}; to
this end, we transform~\eqref{eq:02-08} by setting
$X^*=X(\tau^*,\psi^*)$, $X^1(\beta)=X(\psi^*+\beta,\tau^*)-X^*$,
and $\Delta=\tau-\tau^*$ and by considering the function
\begin{equation*}
  Q(\beta,\Delta)=X(\psi^*+\beta ,\tau*+\Delta)-X^*-X^1
 (\beta)- X_\tau(\psi^*+\beta ,\tau*+\Delta)\Delta.
\end{equation*}
One can readily verify that $Q|_{\Delta=0}=0$ and
$Q_\tau|_{\Delta=0}=0$ and hence $Q=O(\Delta^2)$. Let us substitute
the expansion $X(\psi^*+\beta ,\tau^*+\Delta)=X^*+X^1(\beta)+\Delta
X_\tau(\psi^*+\beta ,\tau^*+\Delta)$ into Eq.~\eqref{eq:02-08} and
take into account the fact that $\langle P,X_\tau\rangle=1$. We
obtain $\Delta= \langle P(\psi^*+\beta
,\tau^*+\Delta),x-X^*-X^1(\beta)-Q\rangle$, or $\Delta= \langle
P(\psi^*+\beta ,\tau^*+\Delta),z-X^1(\beta)\rangle+O(\Delta^2)$.
Next, $P(\psi^*+\beta ,\tau^*+\Delta)=P(\psi^*+\beta
,\tau^*)+\Delta P_\tau(\psi^*+\beta ,\tau^*)+O(\Delta^2)$ and hence
$\Delta= \langle P(\psi^*+\beta ,\tau^*)+\Delta P_\tau(\psi^*+\beta
,\tau^*),z-X^1(\beta)\rangle+O(\Delta^2)$, or
\begin{align*}
\Delta&= \frac{\langle P(\psi^*+\beta ,\tau^*),
z-X^1(\beta)\rangle} {1-\langle\frac{\pa P}{\pa \tau}(\psi^*+\beta
,\tau^*),z-X^1(\beta)\rangle}+O(\Delta^2)
\\&=
-\frac{\langle P(\psi^*+\beta
,\tau^*),X^1(\beta)\rangle}{1+\langle\frac{\pa P}{\pa
\tau}(\psi^*+\beta ,\tau^*),X^1(\beta)\rangle}+\frac{\langle
P(\psi^*+\beta ,\tau^*),z\rangle}{1+\langle \frac{\pa P}{\pa
\tau}(\psi^*+\beta ,\tau^*),X^1(\beta)\rangle}
\\
 &\qquad\qquad-\frac{\langle
P(\psi^*+\beta ,\tau^*),X^1(\beta)\rangle\langle \frac{\pa P}{\pa
\tau}(\psi^*+\beta ,\tau^*),z\rangle}{(1+\langle \frac{\pa P}{\pa
\tau}(\psi^*+\beta
,\tau^*),X^1(\beta)\rangle)^2}+O(\Delta^2)+O(z^2)
\end{align*}
By using \eqref{eq-s2} and \eqref{eq-s3}, we find that
\begin{align*}
\langle P(\psi^*+\beta ,\tau^*),X^1(\beta)\rangle
 &=\Bl\langle P^*,\frac{\beta^3}{6}X^*_{\psi\psi\psi}+
\frac{\beta^4}{24}X^*_{\psi\psi\psi\psi}\Br\rangle
 +\Bl\langle P_\psi^*,\frac{\beta^3}{2}X^*_{\psi\psi}+
\frac{\beta^4}{6}X^*_{\psi\psi\psi}\Br\rangle\\&\qquad\qquad{}+O(\beta^5)
 = \Bl\langle
P_\psi^*,\frac{\beta^3}{6}X^*_{\psi\psi}+
\frac{\beta^4}{24}X^*_{\psi\psi\psi}\Br\rangle+O(\beta^5),
\\
\Bl\langle\frac{\pa P}{\pa \tau}(\psi^*+\beta
,\tau^*),X^1(\beta)\Br\rangle&=O(\beta^2).
\end{align*}
A standard argument of the iteration method readily shows that, to
find the coefficients~\eqref{aaa}, in the last formula it suffices
to retain the terms
\begin{multline*}
-\frac{\langle P(\psi^*+\beta
,\tau^*),X^1(\beta)\rangle}{1+O(\beta^2)} +\frac{\langle
P(\psi^*+\beta ,\tau^*),z\rangle}{1+\beta^2\langle  P_\tau^*,
X_{\psi\psi}^*\rangle/2+O(\beta^3)} =-\Bl\langle
P_\psi^*,\frac{\beta^3}{6}X^*_{\psi\psi}+
\frac{\beta^4}{24}X^*_{\psi\psi\psi}\Br\rangle\\{}+O(\beta^5)+\Bl\langle
P^*+\beta
P^*_\psi+\frac{\beta^2}{2}P^*_{\psi\psi}+O(\beta^3),z\Br\rangle-\frac{\beta^2}{2}\langle
P^*,z\rangle \langle P_\tau^*,X_{\psi\psi}^*\rangle.
\end{multline*}
The iteration method gives the following formulas for the desired
coefficients:
\begin{gather}\label{coef0}
a_0=\tau^*, a_1=a_2=0, a_3=-\langle P_\psi^*,X^*_{\psi\psi}\rangle, a_4=-\langle P_\psi^*,X^*_{\psi\psi\psi}\rangle,\\
\nonumber
b_0=\langle P^*,x-X^*\rangle, b_1=\langle P^*_\psi,x-X^*\rangle, \\
b_2=\langle P^*_{\psi\psi},x-X^*\rangle-\langle P^*,x-X^*\rangle
\langle P_\tau^*,X_{\psi\psi}^*\rangle=(\text{in case (b)}=\langle
P^*_{\psi\psi},x-X^*\rangle.\label{coef1}
\end{gather}
Now, by substituting these coefficients into~\eqref{Ai3}
and~\eqref{int6} and by combining them with~\eqref{eq-s4}
and~\eqref{eq-s4a}, we arrive at the formulas in Theorem~\ref{L1}.
\end{proof}

\end{document}